\theoremstyle{plain}
\newtheorem{theorem}{Theorem}[section]
\newtheorem{assumption}[theorem]{Assumption}
\newcommand{\lb}{\left\{}
\newcommand{\rb}{\right\}}
\newcommand{\R}{\mathbb{R}}
\newcommand{\N}{\mathbb{N}}
\newcommand{\eps}{\varepsilon}
\newcommand{\haz}{\lambda}
\newcommand{\Borel}{\mathscr{B}}
\newcommand{\Pspace}{\mathscr{P}}
\newcommand{\BP}{\mathbb{P}}
\newcommand{\BQ}{\mathbb{Q}}
\newcommand{\BE}{\mathbb{E}}
\newcommand{\filt}{\mathscr{F}}
\newcommand{\la}{\left \langle}
\newcommand{\ra}{\right\rangle}
\newcommand{\ee}{\mathfrak{e}}
\newcommand{\genL}{\mathcal{L}}
\newcommand{\pp}{\mathsf{p}}
\newcommand{\pt}{\star}
\newcommand{\Fail}{L}
\newcommand{\Types}{\mathcal{P}}
\newcommand{\PP}{\mathcal{P}}
\newcommand{\TInt}{\mathcal{T}}
\newcommand{\NN}{n}
\newcommand{\mart}{\mathcal{M}}
\newcommand{\QQ}{\mathcal{Q}}
\begin{document}
\title{Systemic Risk and Default Clustering for Large Financial Systems}

\author{Konstantinos Spiliopoulos}
\address{Department of Mathematics \& Statistics\\
Boston University\\
Boston, MA 02215}
\email{kspiliop@math.bu.edu}

\date{\today.}

\begin{abstract}
As it is known in the finance risk and macroeconomics literature, risk-sharing in large portfolios may increase the probability of creation of
 default clusters and of systemic risk. We review recent developments on mathematical and computational tools for the quantification of such phenomena.  Limiting analysis such as law of large numbers and central limit theorems allow to approximate the distribution in large systems and study quantities such as the loss distribution in large portfolios. Large deviations analysis allow us to study the tail of the loss distribution and to identify pathways to default clustering. Sensitivity analysis allows to understand  the most likely ways in which different effects, such as contagion and systematic risks, combine to lead to large default rates.  Such results could give useful insights into how to optimally safeguard against such events.

\smallskip
\noindent \textbf{Keywords.} Systemic risk, default clustering, large portfolios, loss distribution, asymptotic methods, rare events
\end{abstract}

\maketitle

\section{Introduction}

The past several years have made clear the need to better understand the behaviour in large interconnected financial systems. Almost all areas of modern life are touched by a financial crisis. The recent financial crisis of $2007-2009$ brought into focus the networked structure of the financial world. It challenged the mathematical finance community to understand connectedness in financial systems. The understanding of systemic risk, i.e., the risk that a large numbers of components of an interconnected financial system fails within a short time leading to the failure of the system itself, becomes an important issue to investigate.

Interconnections often make a system robust, but they can also act as conduits for risk. Even things that may seemingly be unrelated, may become related as risk restrictions, may for example, force a sale of one type of a well-performing asset to compensate for the poor behavior of another asset. Thus, appropriate mathematical models need to be developed, in order to help in the understanding of how risk can propagate between financial objects.

It is possible that initial shocks could trigger contagion effects (e.g., \cite{Meinerding2012}). Examples of such shocks include:
changes in interest rate values, in currencies values, changes of commodities prices, or reduction in global economic growth. Then, there may be a transmission
 mechanism which causes other institutions in the system to be affected by the initial shock. An example of such a mechanism is financial linkages among economies.
Another reason could  simply be investor irrationality. In either case, systemic risk causes the perceived risk-return trade-off in the economy to change. Uncertainty becomes an issue and market participants fear subsequent losses in asset prices with a large dispersion in regards to the magnitude of the crisis. Reduce-form point process models of correlated default are many times used (a): to assess portfolio credit risk and (b): to value securities exposed to correlated default risk. The workhorses of these models are counting processes. In this work we focus on using dynamic portfolio credit risk models to study large portfolio asymptotics and default clustering.

Large portfolio asymptotic were first studied in \cite{vasicek}. The model in \cite{vasicek} is a  static model of a homogeneous
 pool and firms default independently of one another conditional on a normally distributed random variable representing a systematic risk factor.
Alternative distributions of the systematic factor were examined in \cite{schloegl-okane}, \cite{lucas-etal} and the case of heterogeneous portfolios was studied
in  \cite{Gordy03arisk-factor}. In \cite{hambly}, the authors extend the model of \cite{vasicek} dynamically and the systematic risk factor follows a Brownian motion. In \cite{hambly}, the authors study a structural model for distance to default process in a pool of names. A firm defaults when the default process hits zero.
 Exploiting conditional independence of defaults, \cite{ddd} and \cite{glasserman-kang} have studied the tail of the loss distribution in the static case. Large deviations arguments were also used in \cite{SpiliopoulosSowers2011} to study stochastic recovery effects on large static pools of credit assets.

Reduced-form models of correlated default timing have appeared in the finance literature under different forms.  \cite{giesecke-weber} take the intensity of a name as a
 function of the state of the names in a specified neighborhood of that name. The authors in \cite{daipra-etal} and \cite{daipra-tolotti} take the intensity to be
 a function of the portfolio loss and each name can be either in a good or in a distressed financial state. These papers prove law of large numbers for the portfolio loss distribution and develop Gaussian approximations to the portfolio
 loss distribution based on central limit theorems. \cite{CMZ} consider the typical behavior of a mean field system with permanent default impact.

\cite{sircar-zari}
 study large portfolio asymptotics for utility indifference valuation of securities exposed to the losses in the pool.
In \cite{PapanicolaouSystemicRisk}, the authors study systematic risk via a mean field model of interacting agents. Using a model of a two well potential, agents can move freely from a healthy state to a failed state.  The authors study probabilities of transition from the healthy to the failed state  using large deviations ideas. In \cite{FouqueIchiba2013} the authors propose and study a model for inter-bank lending and study its stochastic stability.

The authors in \cite{SahaliaDiazLaeven2010} employ
jump-diffusion models driven by Hawkes processes to empirically study default clustering and the time dimension of systemic risk. \cite{duan} proposes a hierarchical
model with individual shocks and group specific  shocks. The work of \cite{BieleckiCrepeyHerbertson} reviews intensity models that are governed by exogenous and endogenous
 Markov Chains. In \cite{GieseckeSpiliopoulosSowers2011}, the authors proposed a dynamic point process model of correlated default timing in a portfolio of firms (``names''). The model
incorporates different sources of default clustering identified in recent empirical research, including idiosyncratic risks, exposure to systematic risk factors
 and contagion in financial markets, see \cite{duffie-saita-wang}, \cite{azizpour-giesecke-schwenkler}. Based on the weak convergence ideas of \cite{GieseckeSpiliopoulosSowers2011}, the authors in \cite{BoCapponi2013} obtain and study formulas for the bilateral counterparty valuation adjustment of a credit default swaps portfolio referencing an asymptotically large number of entities.

The model in \cite{GieseckeSpiliopoulosSowers2011} can be naturally understood as an interacting particle system that is influenced by an exogenous source of randomness. There is a central source of interconnections and failure of any of the components stresses the central 'bus', which in turn can cause the failure of other components (a contagion effect). Computing the distribution of the loss from default
in such models tends to be a difficult task and while Monte-Carlo simulation methods are broadly applicable, they can be slow for large portfolios or large time horizons as it is commonly the interest in practice. Mathematical and computational tools for the approximation to the distribution of the loss from default in large heterogeneous portfolios were then developed in \cite{GieseckeSpiliopoulosSowersSirigano2012}, Gaussian correction theory was developed in \cite{SpiliopoulosSiriganoGiesecke2013} and analysis of tail events and most likely paths to failure via the lens of large deviations theory was then developed in \cite{SpiliopoulosSowers2013}. We remark here that to a large extend systemic risk refers to the tail of the distribution. The authors in \cite{SirignanoSchwenklerGiesecke2013} combine the large pool asymptotic results of \cite{GieseckeSpiliopoulosSowersSirigano2012}-\cite{SpiliopoulosSiriganoGiesecke2013} with maximum likelihood ideas to construct tractable statistical inference procedures for parameter estimation in large financial systems.

Such mathematical results lead to new computational tools for the measurement and prediction of risk in high-dimensional financial networks. These tools mainly include approximations of the distribution of losses from defaults and of portfolio risk measures, and  efficient computational tools for the analysis of extreme default events. The mathematical results  also yield {important insights} into the behavior of systemic risk as a function of the characteristics of the names in the system, and in particular their interaction.

Financial institutions (banks, pension funds, etc) often hold large portfolios in order to diversify away a number of idiosyncratic effects of individual assets.  Deposit insurance premia  depend upon meaningful models and assessment of the macroeconomic effect of the various phenomena that drive defaults.
Development of related mathematical and computational tools can help  inform the design of regulatory policy, improve the pricing of federal deposit insurance, and lead to more accurate risk measurement at financial institutions.

In this paper, we focus on dynamic default timing models for large financial systems that fall into the category of intensity models in portfolio credit risk. Based on the default timing model developed in \cite{GieseckeSpiliopoulosSowers2011}, we address several of the issues just mentioned and that are typically of interest. The mathematical and computational tools developed allow to reach to financial related conclusions for the behavior of such large financial systems.

Although the primary interest of this work is risk in financial systems, models of the type discussed in this paper are generic enough to allow for modifications that make them relevant in other domains, including systems reliability, insurance and epidemiology.    In reliability, a large system of interacting components might have a central connection, and be influenced by an external environment (temperature, for example). The failure of an individual component (which could be governed by an intensity model appropriate for the particular application) increases the stress on the central connection and thus the other components, making the entire system more likely to fail.
In insurance,  the system could represent a pool of insurance policies.  The effect
of wildfires might, in that example, be modelled by a contagion term.  Systematic risk in the form of environmental conditions has an impact on the whole pool.

The rest of the article is structured as follows. In Section \ref{S:Model} we describe the correlated default timing proposed in \cite{GieseckeSpiliopoulosSowers2011}. Section \ref{S:LLN} studies the typical behavior of the loss distribution in such portfolios as the number of names (agents) in the pool grow to infinity. Section \ref{S:CLT} focuses on developing the Gaussian correction theory. As we shall see there, Gaussian corrections are very useful because they make the approximations accurate even for portfolios of relatively small sizes. In Section \ref{S:LDP}, we study the tail of the loss distribution using arguments from the large deviations theory. We also study the most likely path to systemic failure and to the development of default clusters. An understanding of the preferred paths to large default rates and the most likely path to the creation of default clusters can give useful insights into how to optimally safeguard against such events. Importance sampling techniques can then be used to construct asymptotically efficient estimators for tail event probabilities, see Section \ref{S:IS}.  Conclusions are in Section \ref{S:Conclusions}. A large part of the material presented in this work, but not all, is related to recent work of the author described in \cite{GieseckeSpiliopoulosSowers2011}, \cite{GieseckeSpiliopoulosSowersSirigano2012}, \cite{SpiliopoulosSiriganoGiesecke2013} and \cite{SpiliopoulosSowers2013}.

\section{A dynamic correlated default timing model}\label{S:Model}

One of the issues of fundamental importance in financial markets is systemic risk, which may be understood as the likelihood of failure of a substantial fraction of firms in the economy.
There are a number of ways of interpreting this, but our focus will be the behavior of actual \emph{defaults}.  Defaults are discrete events, so one can frame the interest within
the language of point processes.   Empirically, defaults tend to happen in groups; feedback and exposure to market forces (along the lines of ``regimes'') tend to produce
correlation among defaults.

Let us fix a probability space $(\Omega,\filt,\BP)$ where all random variables will be defined. Denote by $\tau^\NN$ the stopping time at which the $n$-th component (or particle) in our system fails. Then, as $\delta\searrow 0$, a failure time $\tau^\NN$ has intensity process $\lambda^{\NN}$, which satisfies
\begin{equation}\label{E:stochhaz} \BP\{\tau^\NN\in (t,t+\delta]|\filt_t,\, \tau^\NN>t\} \approx \lambda^\NN_t \delta, \end{equation}
where $\filt_t$ is the sigma-algebra generated by the entire system
up to time $t$. Hence, we essentially have that the process defined by $1_{\{\tau^\NN\le t\}}-\int^{t}_0\lambda^\NN_s 1_{\{\tau^\NN>s\}}ds$ is a martingale.

Motivated by the empirical studies in \cite{duffie-saita-wang} and \cite{azizpour-giesecke-schwenkler}, we may model the intensity $\lambda^{\NN}$ in such a way
that it depends on three factors: a mean reverting idiosyncratic source of risk, the portfolio loss rate and a systematic risk factor. Heterogeneity can  be addressed
by allowing the intensity parameters of each name to be different. The mean reverting character of the idiosyncratic source of risk is there to guarantee that the effect
 of a default in the pool has a transient effect on the default intensities of the surviving names. The dependence on the portfolio loss rate, denoted by $L^{N}_{\cdot}$
is the term that is responsible for the contagious effects, whereas the systematic risk factor, denoted by $X_{\cdot}$ is an exogenous source of risk.  To be precise, the
 default intensities, $\lambda^{\NN}$'s, are governed by the following interacting system of stochastic differential equations (SDEs)

\begin{equation} \label{E:maina}
\begin{aligned}
d\lambda^\NN_t &= -\alpha_\NN (\lambda^\NN_t-\bar \lambda_\NN)dt + \sigma_\NN \sqrt{\lambda^\NN_t}dW^n_t +  \beta^C_\NN dL^N_t+ \eps \beta^S_\NN \lambda^\NN_t dX_t,\quad
\lambda^\NN_0 = \lambda_\circ^\NN.
\end{aligned}
\end{equation}
where, $\{W^n\}_{n\in \N}$ be a countable collection of independent standard Brownian motions.

The process $L^N_t$ represents the empirical failure rate in the system, i.e.,
\begin{equation} \label{E:mainc}
 L^N_t =\frac{1}{N}\sum_{n=1}^N 1_{\{\tau^\NN\le t\}},
 \end{equation}
 where by letting
$\{\ee_n\}_{n\in \N}$ to be an i.i.d. collection of standard exponential random variables we have
\begin{equation} \label{E:maind} \tau^\NN = \inf\left\{ t\ge 0: \int_{s=0}^t \lambda^\NN_s ds\ge \ee_n\right\}.\end{equation}

The process $X_{t}$ represents the systematic risk, which can be modeled to be the solution to some SDE
\begin{equation}\label{E:mainb}
\begin{aligned} dX_t &= b_{0}( X_t) dt + \sigma_{0}(X_t)dV_t,\quad
X_0= x_\circ.
\end{aligned}
\end{equation}
where  $V$ is a standard Brownian motion which is independent of the $W^n$'s and $\ee_n$'s. Plausible models for $X_{t}$ could be an Ornstein-Uhlenbeck process or a Cox-Ingersoll-Ross (CIR)
 process.

In the case $\beta^{C}_{\NN}=\beta^{S}_{\NN}=0$ for all $n\in\{1,\cdots,N\}$, one recovers the classical CIR process model in credit risk, e.g., \cite{dps}. Namely, the intensity SDE \eqref{E:stochhaz} extends the widely-used CIR process by including two additional terms that generate correlation between failure times. The term  $\eps \beta^S_\NN \lambda^\NN_t dX_t$ induces correlated diffusive movements of the component intensities; the process $X$ represents the state of the macro-economy, which affects all assets in the pool. The term $\beta^C_\NN dL^N_t$ introduces a feedback (contagion) effect. The standard term $-\alpha_\NN (\lambda^\NN_t-\bar \lambda_\NN)dt$ is a mean reverting term allowing
the component to ``heal'' after a shock (i.e., a failure).   This parsimonious formulation allows us to take advantage of the wealth of knowledge about CIR-type processes. The parameter $\eps>0$ allows us to later on focus on rare events.

The process $L^N$ of \eqref{E:mainc}, which simply gives us the fraction of components which have already failed by time $t$, affects
each of the remaining components in a natural way.  Each failure corresponds to a Dirac function in the measure $dL^N$; the term $\beta^C_\NN dL^N_t$ thus
leads to upward impulses in $\lambda^\NN$'s, which leads (via \eqref{E:maind}) to sooner failure of the remaining functioning components.
We might think of a central ``bus'' in a system of components.  Each of the components
depends on this bus, which in turn sensitive to failures in the various components.
In the financial application that was considered in \cite{GieseckeSpiliopoulosSowers2011},
this feedback mechanism is empirically observed to be an important channel for the clustering of defaults in the U.S. (see \cite{azizpour-giesecke-schwenkler}).

In order to allow for heterogeneity, the parameters in \eqref{E:maina} depend on the index $n$.  Define the ``type''
\begin{equation}\label{E:ppdef} \pp^\NN_{t} = (\lambda^{\NN}_{t},\alpha_\NN,\bar \lambda_\NN, \sigma_\NN,\beta^C_\NN,\beta^S_\NN) \end{equation}
for each $n\in \N$ and $t\geq 0$. The $\pp^\NN_{t}$'s take value in $\PP=\R^{3}_{+}\times\R\times\R_{+}\times\R\subset\R^{6}$. The parameters $(\lambda^{\NN}_{0},\alpha_\NN,\bar \lambda_\NN, \sigma_\NN,\beta^C_\NN,\beta^S_\NN)$ are assumed to be bounded uniformly in $n\in\N$.

 We can capture the heterogeneity of the system by defining
$U_{N}=\tfrac1{N}\sum_{n=1}^N \delta_{\pp_\NN}$ and assuming that this empirical type frequency has a (weak) limit. In particular we make the following assumption
\begin{assumption}\label{A:regularity}
We assume that $U= \lim_{N\to \infty}U_N$ exists \textup{(}in $\Pspace( \Types)$\textup{)}.
 \end{assumption}

Proposition 3.3 in \cite{GieseckeSpiliopoulosSowers2011} guarantees that under the assumption of an existence of a unique strong solution for the SDE for $X_{\cdot}$
process, the system  \eqref{E:maina}--\eqref{E:mainb} has a unique strong solution such that $\lambda^{\NN}_{t}\geq 0$ for every $N\in\N$, $n\in\{1,\cdots,N\}$ and $t\geq 0$. The model \eqref{E:maina}--\eqref{E:mainb} is a mean-field type model; the feedback occurs through the empirical average of the pool of names.  It is somewhat similar to certain genetic models (most notably the Fleming-Viot process; see \cite{DawsonHochberg}, \cite[Chapter 10]{MR88a:60130},  and \cite{FlemingViot}). However, as it is also demonstrated in \cite{GieseckeSpiliopoulosSowers2011} and in \cite{GieseckeSpiliopoulosSowersSirigano2012}, the structure of the system  \eqref{E:maina}--\eqref{E:mainb} presents several difficulties
that bring the analysis of such systems outside the scope of the standard setup.

\section{Typical behavior: Law of large numbers}\label{S:LLN}

The system \eqref{E:maina}--\eqref{E:mainb} can
naturally be understood as an interacting particle system.  This
suggests how to understand its large-scale behavior.  The structure of the feedback
(the empirical average $\Fail^N$) is of mean-field type (roughly within the class of McKean-Vlasov models; see \cite{Gartner88}, \cite{KotolenezKurtz2010}).
 An understanding of ``typical'' behavior of a system as $N\to \infty$ is fundamental in identifying ``atypical'' or ``rare'' events.

To formulate the law of large numbers result, we define the empirical  distribution of the $\pp^\NN$'s corresponding to the names that have survived up to time $t$, as follows:
\begin{equation*} \mu^N_t = \frac{1}{N}\sum_{n=1}^N\delta_{\pp^N_t}1_{\{\tau^\NN>t\}}.
\end{equation*}
This captures the entire dynamics of the model (including the effect of the heterogeneities).  We can directly calculate the failure rate from the $\mu^N$'s:
\begin{equation}\label{Eq:PortfolioLoss}
 \Fail^N_t =1-\mu^N_t(\PP),\qquad t\ge 0.
\end{equation}
Let us then identify the limit of $\mu^N_t(\PP)$ as $N\to \infty$. This is a law of large numbers (LLN) result and it
identifies the baseline ``typical'' behavior of the system.
For $f\in C^2(\PP)$, let
\begin{equation}\label{E:Operators1}
\begin{gathered} (\genL_1 f)(\pp) = \frac12 \sigma^2\haz\frac{\partial^2 f}{\partial \haz^2}(\pp) - \alpha(\haz-\bar \haz)\frac{\partial f}{\partial \haz}(\pp)-\haz f(\pp)\\
(\genL_2 f)(\pp) = \beta^C\frac{\partial f}{\partial \haz}(\pp)\\
(\genL_3^x f)(\pp) = \eps\beta^S \haz b_{0}(x)\frac{\partial f}{\partial \haz}(\pp)+\frac{\eps^2}{2}(\beta^S)^2\haz^2\sigma_{0}^{2}(x)\frac{\partial^2 f}{\partial \haz^2}(\pp)\\
(\genL_4^x f)(\pp) =\eps\beta^S\haz\sigma_{0}(x)\frac{\partial f}{\partial
\haz}(\pp)\qquad \text{and}\qquad
\QQ(\pp) = \haz
 \end{gathered}
\end{equation}
for $\pp = (\haz,\alpha,\bar \haz,\sigma,\beta^C,\beta^S)$.
The generator $\genL_1$ corresponds to the diffusive part of the
intensity with killing rate $\haz$, and $\genL_2$ is the
macroscopic effect of contagion on the surviving intensities at any
given time.  The operators $\genL_3^x$ and $\genL_4^x$ capture the dynamics due to the exogenous systematic risk $X$.  Then $\mu^N$ tends in distribution (in the natural topology
of subprobability measures on $\PP$) to a measure-valued process $\bar \mu$.  Letting
\begin{equation*} \la f,\mu\ra = \int_{\pp\in \PP}f(\pp)\mu(d\pp) \end{equation*}
for all $f\in C^2(\PP)$, the limit $\bar \mu$ satisfies the stochastic evolution equation
\begin{equation}\label{E:weakSIPDE}
d\la f,\bar \mu_t\ra =  \left\{\la \genL_1f,\bar \mu_t\ra+ \la \QQ,\bar \mu_t\ra
\la \genL_2f,\bar \mu_t\ra+\la \genL^{X_t}_3 f,\bar
\mu_t\ra\right\}dt+\la \genL^{X_t}_4 f,\bar \mu_t\ra dV_t\quad
 \text{ a.s.}
\end{equation}
With sufficient regularity, this is equivalent to the stochastic integro-partial differential equation (SIPDE)
\begin{equation}\label{E:SIPDE}
d\upsilon =  \genL^*_1 \upsilon dt+ \left(\int \QQ \upsilon\right) \genL^*_2 \upsilon dt +\genL^{X_t,*}_3\upsilon dt+\eps \genL^{X_t,*}_4\upsilon dV_t\quad \text{ a.s.}
\end{equation}
where $^*$ denotes adjoint in the appropriate sense (for notational simplicity, we have written \eqref{E:SIPDE} to include the types as one of the coordinates; in a heterogeneous
collection in practice we would often use only $\haz$ in solving \eqref{E:SIPDE}). We recall the rigorous statement in Theorem \ref{T:MainLLN0}.

The SIPDE \eqref{E:SIPDE} gives us a ``large system approximation'' of the failure rate:
\begin{align}\label{E:firstorder}
\Fail^N_t\approx 1-\bar\mu_t(\PP)=1-\int_{\PP} \upsilon(t,\pp)d\pp.
\end{align}
The computation of the first-order approximation \eqref{E:firstorder} suggested by the LLN requires solving the SIPDE \eqref{E:SIPDE} governing the density of the limiting measure. In \cite{GieseckeSpiliopoulosSowersSirigano2012} a numerical method for this purpose is proposed. The method is based on an infinite system of SDE's for certain moments of
the limiting measure. These SDEs are driven by the systematic risk process $X$ and a truncated system can be solved using a discretization or random ODE scheme. The solution to the SDE system leads to the solution to the SIPDE via an inverse moment problem.

\begin{figure}[t!]
\begin{center}
\includegraphics[scale=0.5]{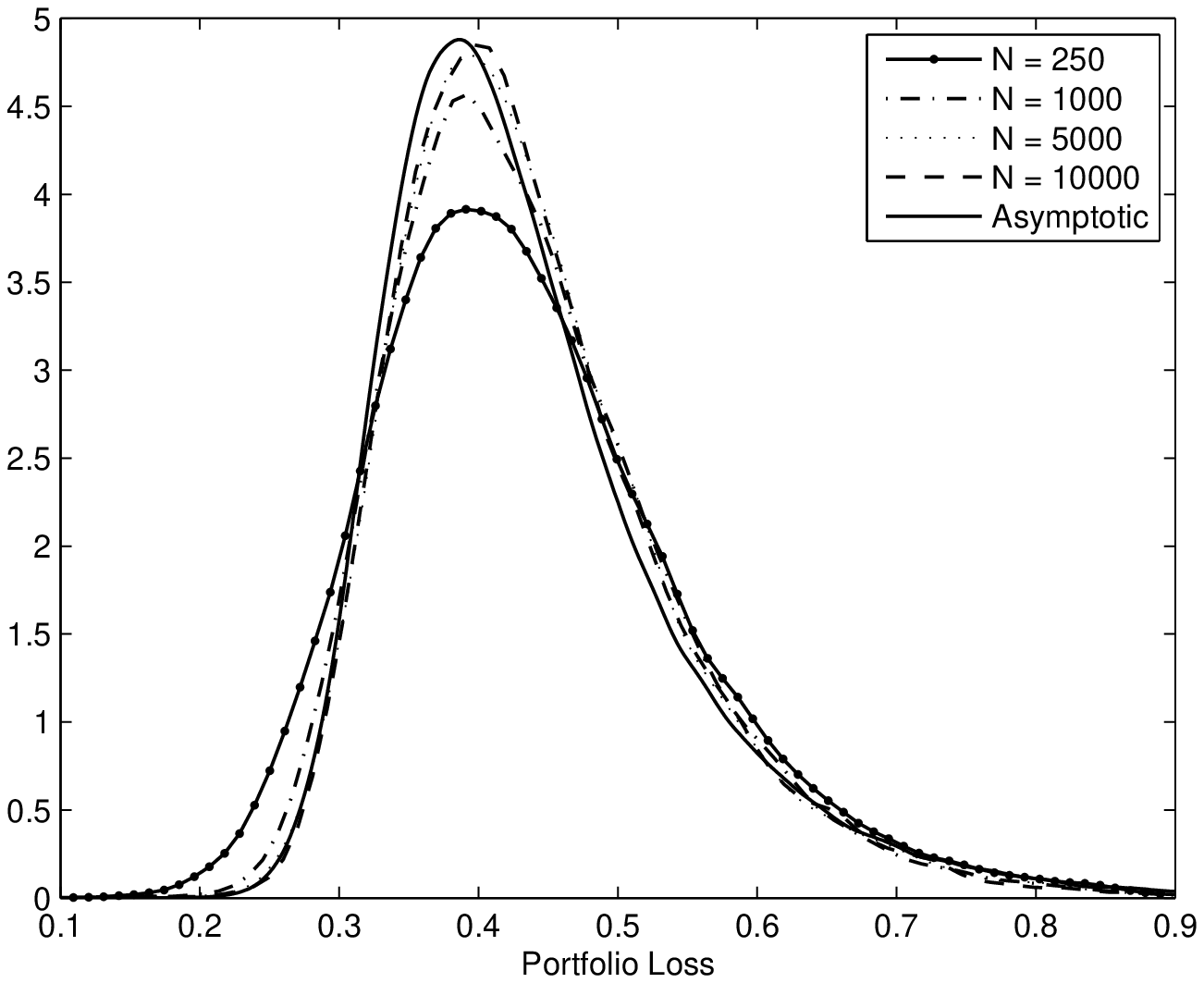}%{betaCfourBetaSeight.eps}
\includegraphics[scale=0.5]{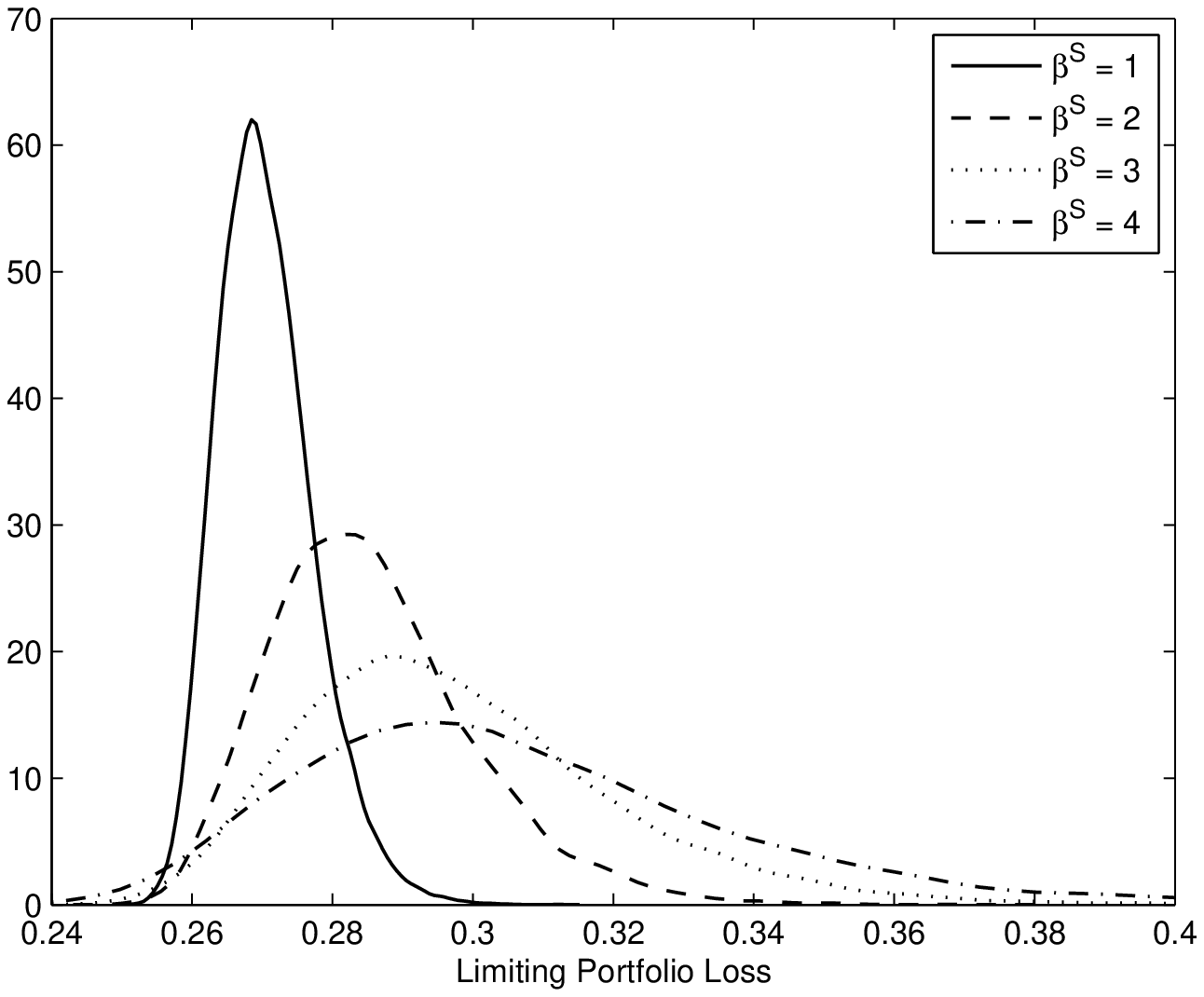}%{beta_SforSDEmoment.eps}
\end{center}
\caption{\label{fig: BetaCone} On the left: Comparison of distributions of failure rate $\Fail^N_t$ for different $N$ at $t = 1$. Parameter choices:
$(\sigma,\alpha,\bar{\haz},\haz_0,\beta^C,\beta^S)=(.9,4,.2,.2,4,8)$. On the right: Comparison of distribution of limiting failure rate $1-\bar\mu_t(\PP) $ for different values of the systematic risk sensitivity $\beta^S$ at $t = 1.$ Parameter choices:
$(\sigma,\alpha,\bar{\haz},\haz_0,\beta^C)=(.9,4,.2,.2,2)$.
 }
\end{figure}

The approximation \eqref{E:firstorder} has significant computational advantages over a naive Monte Carlo simulation of the high-dimensional original stochastic system \eqref{E:maina}--\eqref{E:mainb} and its accuracy is demonstrated in the left of Figure  \ref{fig: BetaCone} for a specific choice of parameters. It also provides information about catastrophic
failure.

The tail represents extreme default scenarios, and these are at the center of risk measurement and management applications in practice. The analysis of the limiting distribution generates important insights into the behavior of the tails as a function of the characteristics of the system \eqref{E:maina}--\eqref{E:mainb}. For example, we see that the tail is heavily influenced by the sensitivity of a name to the variations of the systematic risk $X$. The bigger the sensitivity the fatter the tail, and the larger the likelihood of large losses in the system (see the right of Figure \ref{fig: BetaCone}).  Insights of this type can help understand the role of contagion and systematic risk, and how they interact to produce atypically large failure rates. This, in turn, leads to ways to minimize or ``manage'' catastrophic failures.

Let us next present the statement of the mathematical result. We denote by $E$  the collection of sub-probability measures (i.e., defective probability measures) on $\PP$; i.e., $E$ consists
of those Borel measures $\nu$ on $\PP$ such that $\nu(\PP)\le 1$.

\begin{theorem}[Theorem 3.1 in \cite{GieseckeSpiliopoulosSowersSirigano2012}]\label{T:MainLLN0}
We have that $\mu^{N}_{\cdot}$ converges in distribution to $\bar{\mu}_{\cdot}$ in
$D_E[0,T]$. The evolution of
$\bar{\mu}_{\cdot}$ is given by the measure evolution equation
\begin{align*}
d\la f,\bar \mu_t\ra_E &=  \left\{\la \genL_1f,\bar \mu_t\ra_E+ \la \QQ,\bar \mu_t\ra_E
\la \genL_2f,\bar \mu_t\ra_E+\la \genL^{X_{t}}_3 f,\bar
\mu_t\ra_E\right\}dt\nonumber\\
&+\la \genL^{X_{t}}_4 f,\bar \mu_t\ra_E dV_{t},\quad
\forall f\in C^\infty(\PP) \text{ a.s.}
\end{align*}
Suppose there is a solution of the nonlinear SPDE
\begin{align} \label{Eq:NonlinearSPDE}
\begin{aligned}
d\upsilon(t,\pp) &= \left\{\genL_1^*\upsilon(t,\pp) +\genL_3^{*,X_{t}}\upsilon(t,\pp) + \left(\int_{\pp'\in \PP} \QQ(\pp')\upsilon(t,\pp')d\pp'\right) \genL_2^* \upsilon(t,\pp)\right\}dt\\
&+ \genL_4^{*,X_{t}}\upsilon(t,\pp)dV_t,\quad t>0,\quad \pp\in \PP
\end{aligned}
\end{align}
where $\genL_i^*$ denote adjoint operators, with initial condition
\begin{equation*} \lim_{t\searrow 0}\upsilon(t,\pp)d\pp = U(d\pp). \end{equation*}
Then
\begin{equation*}
 \bar{\mu}_t = \upsilon(t,\pp)d\pp.
 \end{equation*}
\end{theorem}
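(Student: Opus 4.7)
The plan is to follow the standard martingale-problem route for interacting particle systems. First, I derive the pre-limit dynamics of $\la f,\mu^N_t\ra$ for $f\in C^2(\PP)$ by applying It\^o's formula to $f(\pp^n_t)$ (only the $\haz$-component of $\pp^n$ is dynamic) and combining with the compensator of the survival indicator $1_{\{\tau^n>t\}}$, which equals $-\int_0^t \haz^n_s 1_{\{\tau^n>s\}}\,ds$; the killing term $-\haz f$ hidden inside $\genL_1$ arises precisely from this compensation. Summing in $n$, dividing by $N$, and using that $L^N$ has compensator $\int_0^t \la \QQ,\mu^N_s\ra\,ds$, I obtain
\begin{align*}
d\la f,\mu^N_t\ra &= \la \genL_1 f+\genL^{X_t}_3 f,\mu^N_t\ra\,dt + \la \genL_2 f,\mu^N_t\ra \la \QQ,\mu^N_t\ra\,dt\\
&\quad + \la \genL^{X_t}_4 f,\mu^N_t\ra\,dV_t + dM^{N,f}_t,
\end{align*}
where $M^{N,f}$ collects the martingales driven by the $\{W^n\}$ and by the compensated default indicators. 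Uniform (in $n,N$) moment bounds on $\haz^n$, which follow from the CIR-type structure of \eqref{E:maina} and the uniform boundedness of the parameters, yield $\BE\bigl[\sup_{t\le T}|M^{N,f}_t|^2\bigr]=O(1/N)$ by a direct $L^2$ computation.

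Next, I establish tightness of $\{\mu^N\}$ in $D_E[0,T]$, where $E$ carries the vague topology of subprobability measures. Compact containment is automatic since $E$ is vaguely compact; by Jakubowski's criterion, tightness then reduces to tightness of $\{\la f,\mu^N\ra\}$ in $D_\R[0,T]$ for $f$ in a countable separating family in $C^2(\PP)$, which follows from Aldous' criterion applied to the pre-limit identity together with the moment bounds on $\haz^n$.

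The identification of subsequential limits uses the same identity: the martingale terms vanish in probability by the $O(1/N)$ bound, so every limit point satisfies \eqref{E:weakSIPDE}. The main obstacle here is the nonlinear coupling $\la \genL_2 f,\mu^N_t\ra\,\la \QQ,\mu^N_t\ra$; since $\QQ(\pp)=\haz$ is unbounded on $\PP$, weak continuity of $\la \QQ,\cdot\ra$ is not free, and I would close it by a truncation argument combined with uniform integrability of $\haz$ under $\mu^N_t$, again powered by CIR-type moment estimates. Uniqueness of solutions to \eqref{E:weakSIPDE}, driven by a fixed realization of $X$ and $V$, is the other crucial step: because the nonlinearity enters only through the scalar functional $\la\QQ,\bar\mu_t\ra$, a Gronwall-type argument for $\la f,\bar\mu_t\ra-\la f,\bar\mu'_t\ra$ tested over a suitable family in $C^2(\PP)$ closes the estimate, promoting subsequential convergence to full convergence in distribution.

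For the SPDE conclusion, suppose $\upsilon$ solves \eqref{Eq:NonlinearSPDE} with initial datum $U$. Pairing \eqref{Eq:NonlinearSPDE} with $f\in C^2_c(\PP)$ and integrating by parts against the adjoints $\genL_i^*$ shows that $\nu_t(d\pp):=\upsilon(t,\pp)\,d\pp$ satisfies the measure equation \eqref{E:weakSIPDE} with the same initial value, so by the uniqueness just established, $\bar\mu_t=\upsilon(t,\pp)\,d\pp$. In my judgment the hardest single step is the uniqueness for \eqref{E:weakSIPDE} in the presence of the unbounded functional $\la\QQ,\cdot\ra$ and the multiplicative $V$-noise $\genL^{X_t,*}_4$; it is the CIR structure of the intensity that makes the requisite moment bounds available and closes the energy inequality.
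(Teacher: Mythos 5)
Your proposal is essentially correct and follows the same martingale-problem route the paper sketches in Section~\ref{S:CLT}: derive the pre-limit generator identity for $\la f,\mu^N_t\ra$, bound the martingale remainder by $O(1/N)$ in $L^2$, prove tightness in $D_E[0,T]$, identify the limit, and close with uniqueness of \eqref{E:weakSIPDE}; the SPDE representation then follows by pairing $\genL_i^*$ against test functions and invoking that uniqueness.

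One detail worth making explicit, since your pre-limit identity is written as if exact: the contagion term $\beta^C\,dL^N_t$ in \eqref{E:maina} is a pure-jump input, so when you apply It\^o's formula to $f(\pp^n_t)1_{\{\tau^n>t\}}$ the contagion contribution enters as the compensated jump $\bigl[f\bigl(\pp^n_{s-}+\tfrac{\beta^C_n}{N}e_1\bigr)-f(\pp^n_{s-})\bigr]$ against the compensator $N\la\QQ,\mu^N_s\ra\,ds$ of the default counting process. The linear term $\la\genL_2 f,\mu^N_t\ra\la\QQ,\mu^N_t\ra\,dt$ only emerges after a Taylor expansion whose $O(N^{-2})$-per-jump error, summed against $O(N)$ expected jumps, contributes a vanishing $O(N^{-1})$ drift; the paper's own heuristic derivation displays the finite difference $f(\pp^\NN+\tfrac{\beta^C_\NN}{N}e_1)-f(\pp^\NN_t)$ before passing to $\genL_2$. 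There is also a cross-variation term between $f(\pp^n_\cdot)$ and $1_{\{\tau^n>\cdot\}}$ arising when name $n$ itself defaults, which is another $O(N^{-1})$ correction absorbed into the error. Neither changes the strategy, but both should be recorded when claiming the pre-limit display; otherwise the route you propose, including the Gronwall-based uniqueness driven by the scalar observable $\la\QQ,\cdot\ra$ and the truncation of the unbounded $\QQ$ via CIR moment estimates, matches the argument of the cited references.
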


We close this section, by briefly describing the method of moments that leads to the numerical computation of the loss from default. We focus our discussion on the homogeneous case and we refer the reader to \cite{GieseckeSpiliopoulosSowersSirigano2012} for the general case.

Firstly, we remark that the SPDE (\ref{Eq:NonlinearSPDE}) can be supplied with appropriate boundary conditions, which  as it is mentioned in \cite{GieseckeSpiliopoulosSowersSirigano2012}, are
$$\upsilon(t,\lambda = 0) = \upsilon(t, \lambda = \infty) = 0.$$

Secondly, it turns out that for $k\in\N$, the moments $u_k(t) = \int_0^{\infty} \lambda^k \upsilon(t, \lambda) d \lambda$ exist almost surely.  By (\ref{E:firstorder}) is is clear that we want to compute $u_{0}(t)$. In particular, note that the limiting loss $L_t = 1 - u_0(t)$.

  By an integration by parts and using the boundary conditions at $\lambda = 0$ and at $\lambda = \infty$, we can prove that they follow the following system of stochastic differential equations
\begin{equation}\label{Eq: momentSDEone}
\begin{aligned}
d u_k(t)  &= \big\{ u_k(t) \big{(} - \alpha k + \beta^S b_0(X_t) k + 0.5 (\beta^S)^2 \sigma_0^2 (X_t) k (k-1)  \big{)}   \\
&+ u_{k-1}(t) \big{(} 0.5 \sigma^2 k(k-1) + \alpha \bar{\lambda} k + \beta^C k u_1(t) \big{)} - u_{k+1}(t) \big\} dt  + \beta^S \sigma_0(X_t) k u_k(t) d V_t, \\
u_k(0) &= \int_0^{\infty} \lambda^k \Lambda_{\circ} (\lambda) d \lambda,
\end{aligned}
\end{equation}
where $\Lambda_{\circ} (\lambda)=\lim_{N\rightarrow\infty}\frac{1}{N}\sum_{n=1}^{N}\delta_{\lambda_{0}^{\NN}}(\lambda)$.

 The system \eqref{Eq: momentSDEone} is a non-closed system since to determine $u_k(t)$, one needs to know $u_{k+1}(t)$. So, in practice one must perform a truncation at some level $k = K$ where we let $u_{K+1} = u_K$ (that is, we use the first $K+1$ moments). As it is shown in \cite{GieseckeSpiliopoulosSowersSirigano2012} one needs relatively small numbers of moments in order to compute the zero-th moment $u_{0}(t)$ with good accuracy. Then, by solving backwards, one computes $u_0(t)$ and  from this one gets the limiting loss distribution
\[
L_t = 1 - u_0(t).
\]

\section{Central limit theorem correction}\label{S:CLT}

The asymptotics of \eqref{E:SIPDE} give via \eqref{E:firstorder} the limiting behavior of the system as the number of components becomes large.  Starting with that
result, the results in \cite{SpiliopoulosSiriganoGiesecke2013} develop Gaussian fluctuation theory analogous to the central limit theory (see for example \cite{daipra-etal}, \cite{daipra-tolotti}, \cite{FernandezMeleard}, \cite{KurtzXiong} for some related literature). This  result  provides the leading order asymptotics correction to the law of large numbers approximation developed in Section \ref{S:LLN}. In practical terms, the usefulness of such of a result is twofold: (a) the approximation is accurate even for portfolios of moderate size, see  \cite{SpiliopoulosSiriganoGiesecke2013}, and (b): one can make use of the approximation to develop tractable statistical inference procedures for the statistical calibration of such models, see \cite{SirignanoSchwenklerGiesecke2013}.

To be more precise, let us define the signed measure
\begin{equation*}
\Xi^N_t = \sqrt{N}\lb \mu^N_t-\bar{\mu}_t\rb;
\end{equation*}
as $N\to \infty$.  Conditional on the exogenous systematic risk process $X$, a central limit theorem applies and $\bar{\Xi}= \lim_{N\to \infty}\Xi^N$ exists  in an appropriate space of distributions and is Gaussian.   Unconditionally, it may not be Gaussian but is of mean zero (since we have removed the bias $\bar \mu$ from $\mu^N$).

The usefulness of the fluctuation analysis is that it leads to a second-order approximation to the distribution of the portfolio loss $L^{N}$ in large pools. The fluctuations analysis yields an approximation which improves the first-order approximation \eqref{E:firstorder} suggested by the LLN, especially for smaller system sizes $N$.

In particular, Theorem \ref{T:MainCLT} implies that
\begin{equation*}\label{ApproxMain00}
\mathbb{P}(\sqrt{N}(L^{N}_{t}-L_{t})\geq \ell)\approx \mathbb{P}(\bar{\Xi}_{t}(\PP)\leq-\ell)
\end{equation*}
for large $N$. This motivates the approximation
\begin{align*}
\mu^N_t = \frac{1}{\sqrt{N}}\Xi^N_{t} + \bar{\mu}_t \overset{d} \approx \frac{1}{\sqrt{N}} \bar{\Xi}_{t} + \bar{\mu}_t,
\end{align*}
which then implies the following second-order approximation for the portfolio loss.
\begin{align}\label{ApproxMain}
L_t^N \overset{d} \approx L_t - \frac{1}{\sqrt{N}} \bar{\Xi}_{t}(\PP).
\end{align}

The numerical computation of the second-order approximation \eqref{ApproxMain} suggested by the fluctuation analysis is amenable to a moment method similar to that used for computing the first-order approximation (\ref{E:firstorder}).  In addition to solving the LLN SIPDE, we would also need to solve for the fluctuation limit. This limit is governed by a stochastic evolution equation, which gives rise to an additional system of ``fluctuation moments.'' This system is driven by the exogenous systematic risk process $X$ and the martingale $\bar{\mart}_{t}$ in Theorem \ref{T:MainCLT} that is conditionally Gaussian given $X$.

Left of Figure \ref{fig:CLT} compares the approximate loss distribution with the actual loss distribution for specific parameter choices.  It is evident from the numerical comparisons that  the second-order approximation has increased accuracy, especially for smaller portfolios and in the tail of the distribution.
The right of Figure \ref{fig:CLT} compares for the $95$ and $99$ percent value at risk (VaR) between the actual loss, LLN approximation (\ref{E:firstorder}), and approximation (\ref{ApproxMain}) for a pool of $N = 1,000$ names.  It is also evident from the figure that the approximation for the VaR based on (\ref{ApproxMain}) is much more accurate than the law of large numbers approximation.

\begin{figure}[t!]
\begin{center}
\includegraphics[scale=0.5]{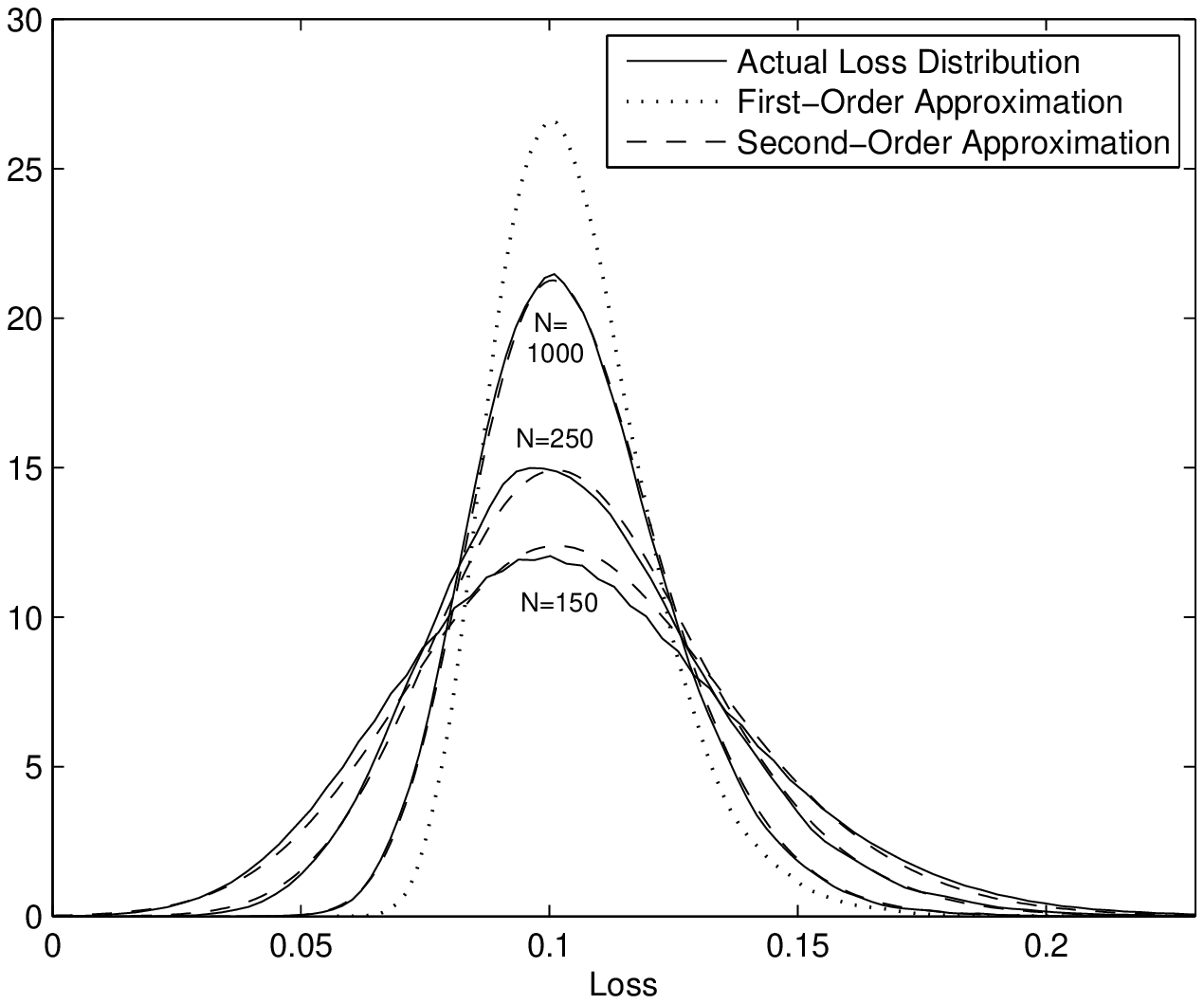}%{betaCfourBetaSeight.eps}
\includegraphics[scale=0.5]{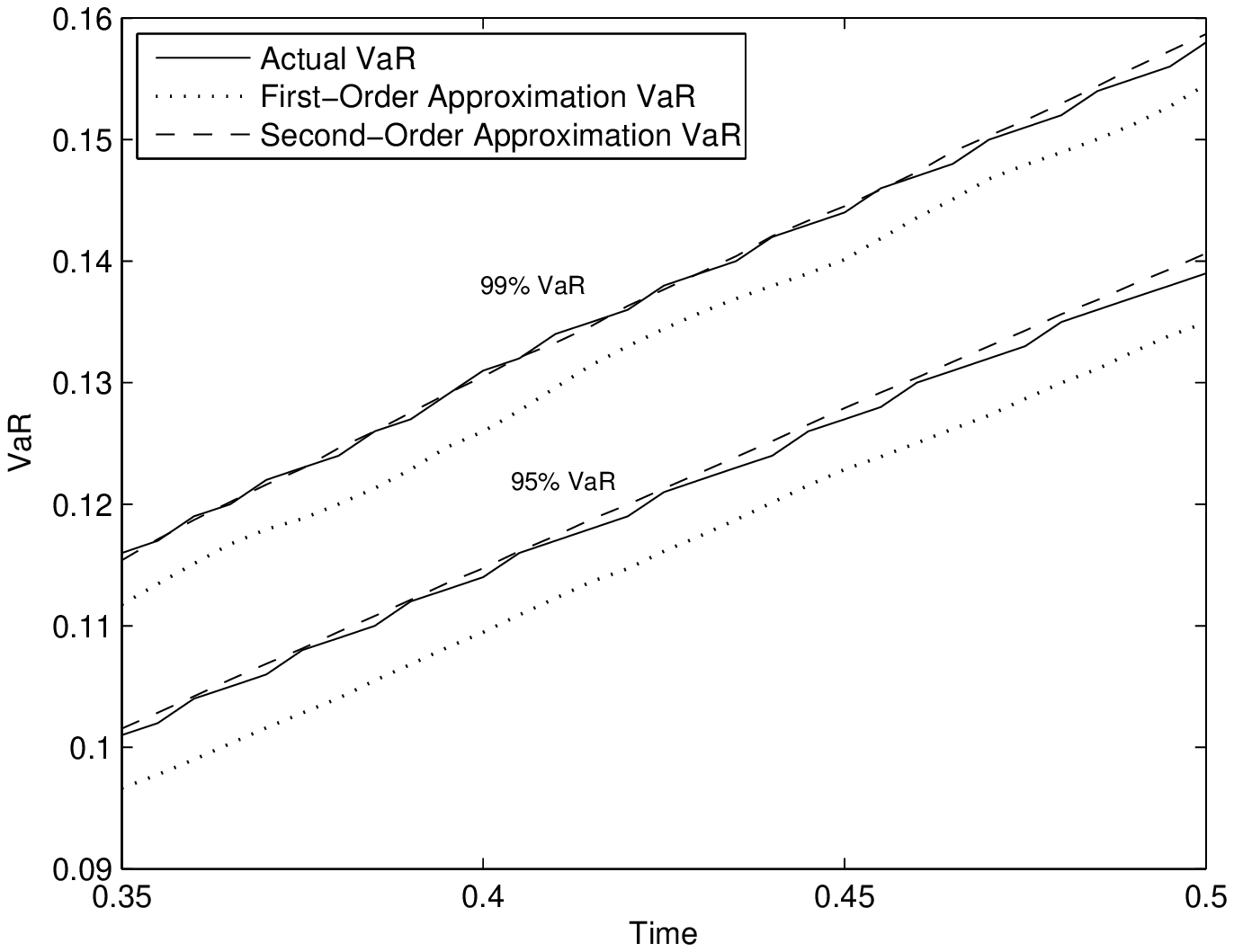}%{beta_SforSDEmoment.eps}
\end{center}
\caption{\label{fig:CLT} On the left: Comparison of approximate and actual loss distributions of failure rate $\Fail^N_t$ for different $N$ at $t = 0.5$. Parameter choices:
$(\sigma,\alpha,\bar{\haz},\haz_0,\beta^C,\beta^S)=(.9,4,.2,.2,1,1)$. On the right: Comparison of approximate and actual VaR.  Parameter choices:
$(\sigma,\alpha,\bar{\haz},\haz_0,\beta^C,\beta^S)=(.9,4,.2,.2,1,1)$. In both cases, $X$ is an OU process with reversion speed 2, volatility 1, initial value 1 and mean 1.
 }
\end{figure}

Let us close this section, with a few words on the actual mathematical result. It turns out that the convergence $\bar{\Xi}= \lim_{N\to \infty}\Xi^N$ happens in an appropriate weighted Hilbert space, which we denote by  $W^{J}_{0}(w,\rho)$, with $w$ and $\rho$ the appropriate weight functions, $J\in\N$ and  $W^{-J}_{0}(w,\rho)$ will be its dual. Such weighted Sobolev spaces were introduced in \cite{Purtukhia1984} and further generalized in  \cite{GyongiKrylov1992} to study stochastic partial differential equations with unbounded coefficients. These weighted spaces turn out to be convenient for the present situation, see \cite{SpiliopoulosSiriganoGiesecke2013}.

In order to state the convergence result, we introduce some operators. Let $\pp\in \PP\subset \mathbb{R}^{6}$ and for $f\in C^{2}_{b}(\PP)$, define
\begin{align*}\label{E:Operators2}
%\begin{aligned}
(\mathcal{G}_{x,\mu}f)(\pp)&=(\genL_1f)(\pp)+ (\genL^{x}_3 f)(\pp)+ \la \QQ, \mu\ra (\genL_2f)(\pp)+\la \genL_2f, \mu\ra \QQ(\pp)\\
(\genL_5 (f,g))(\pp) &=\sigma^{2} \frac{\partial f}{\partial \lambda}(\pp)\frac{\partial g}{\partial \lambda}(\pp)\lambda\\
(\genL_6 (f,g))(\pp) &=f(\pp) g(\pp)\lambda\\
(\genL_7 f)(\pp) &=f(\pp)\lambda\\
%\QQ(\pp)  &= \lambda (\beta^C)^{2}.
%\end{aligned}
\end{align*}

Then, we have the following theorem related to the fluctuations analysis.
\begin{theorem}\label{T:MainCLT}[Theorem 4.1 in \cite{SpiliopoulosSiriganoGiesecke2013}]
For $J>0$ large enough and for appropriate weight functions $(w,\rho)$, the sequence $\{\Xi^{N}_{t}, t\in[0,T]\}_{N\in \N}$ is relatively compact in $D_{W_{0}^{-J}(w,\rho)}[0,T]$. For any $f\in W_{0}^{J}(w,\rho)$, the limit accumulation point of $\Xi^{N}$, denoted by $\bar{\Xi}$, is unique in $W_{0}^{-J}(w,\rho)$ and  satisfies the  stochastic evolution equation
\begin{equation}\label{Eq:CLT}
\la f,\bar \Xi_t\ra = \la f,\bar \Xi_0\ra+ \int_{0}^{t}\la \mathcal{G}_{X_{s},\bar{\mu}_{s}}f,\bar
\Xi_s\ra ds+\int_{0}^{t}\la \genL^{X_{s}}_4 f,\bar \Xi_s\ra dV_{s}+\la f,\bar{\mart}_t\ra, \text{ a.s.}
\end{equation}
for any $f\in W_{0}^{J}(w,\rho)$, where $\bar{\mart}$ is a distribution-valued martingale with predictable variation process
\begin{equation*}
[\la f,\bar{\mart}\ra]_t=\int_{0}^{t}\left[\la \genL_5 (f,f),\bar \mu_s\ra+ \la \genL_6 (f,f),\bar \mu_s\ra+\la \genL_2 f,\bar \mu_s\ra^{2}\la \QQ,\bar \mu_s\ra-2\la \genL_7 f,\bar \mu_s\ra\la \genL_2 f,\bar \mu_s\ra\right]ds.
\end{equation*}
Conditional on the $\sigma$-algebra  $\mathcal{V}_{t}$ that is generated by the $V-$Brownian motion, $\bar{\mart}_t$ is centered Gaussian with covariance function, for $f,g\in W_{0}^{J}(w,\rho)$,
  given by
\begin{align}
\mathrm{Cov}\left[\la f, \bar{\mart}_{t_1}\ra, \la g, \bar{\mart}_{t_2}\ra\,\Big|\, \mathcal{V}_{t_1 \vee t_2}\right]&=\BE\bigg[\int_{0}^{t_1 \wedge t_2}\left[\la \genL_5 (f,g),\bar \mu_s\ra+ \la \genL_6 (f,g),\bar \mu_s\ra+\la \genL_2 f,\bar \mu_s\ra\la \genL_2 g,\bar \mu_s\ra\la \QQ,\bar \mu_s\ra\right.\nonumber\\
& \hspace{1cm}\left.-\la \genL_7 g,\bar \mu_s\ra\la \genL_2 f,\bar \mu_s\ra-\la \genL_7 f,\bar \mu_s\ra\la \genL_2 g,\bar \mu_s\ra\right]ds\, \Big|\,\mathcal{V}_{t_1 \vee t_2}\bigg].\label{Eq:ConditionalCovariation}
\end{align}
\end{theorem}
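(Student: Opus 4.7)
I would follow the standard semimartingale route for McKean--Vlasov fluctuation limits, implemented in the weighted Sobolev framework of \cite{SpiliopoulosSiriganoGiesecke2013} in which $\Xi^{N}$ lives robustly despite the unboundedness of the intensity coordinate. The starting point is to apply It\^o's formula to $\la f,\mu^N_t\ra=\tfrac{1}{N}\sum_{n=1}^{N} f(\pp^{\NN}_t)1_{\{\tau^{\NN}>t\}}$ for $f\in C^{2}_{b}(\PP)$. Between defaults this produces, pathwise, the drift $\la \genL_1 f+\genL_3^{X_t}f,\mu^N_t\ra dt$ together with the diffusion martingales from the $W^n$'s and the $V$-driven integral $\la \genL_4^{X_t}f,\mu^N_t\ra dV_t$. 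At each default time $\tau^{\NN}$ the mass $f(\pp^{\NN})/N$ is removed and simultaneously every surviving intensity jumps by $\beta^{C}_n/N$, so the total jump in $\la f,\mu^N\ra$ equals $\tfrac{1}{N}[\la\genL_2 f,\mu^N\ra-f(\pp^{\NN})]+O(N^{-2})$; compensating against the intensity $\lambda^n$ yields the contagion drift $\la\QQ,\mu^N\ra\la\genL_2 f,\mu^N\ra\,dt$ plus a compensated jump martingale. Subtracting the LLN evolution of Theorem \ref{T:MainLLN0} and multiplying by $\sqrt{N}$ gives a semimartingale equation for $\la f,\Xi^N_t\ra$; the only nonlinear term linearises as
\begin{equation*}
\sqrt{N}\bigl[\la\QQ,\mu^N\ra\la\genL_2 f,\mu^N\ra-\la\QQ,\bar\mu\ra\la\genL_2 f,\bar\mu\ra\bigr]=\la\genL_2 f,\mu^N\ra\la\QQ,\Xi^N\ra+\la\QQ,\bar\mu\ra\la\genL_2 f,\Xi^N\ra,
\end{equation*}
whose coefficients converge by the LLN to $\la\QQ,\bar\mu\ra$ and $\la\genL_2 f,\bar\mu\ra$, so that the drift acting on $\Xi^N$ reproduces $\la\mathcal{G}_{X_t,\bar\mu_t}f,\,\cdot\,\ra$ in the limit.

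\textbf{Tightness and identification of the limit.} The weight pair $(w,\rho)$ must be chosen so that the killing rate $\QQ=\haz$ and the $\haz$-linear/quadratic coefficients in $\genL_3^{x},\genL_4^{x}$ act continuously on the Hilbert scale $W^{J}_{0}(w,\rho)\to W^{J-k}_{0}(w,\rho)$ for small $k$; taking $J$ large enough accommodates all generators appearing in \eqref{Eq:CLT}. Testing $\la f,\Xi^{N}_{t}\ra$ along a countable orthonormal basis of $W^{J}_{0}(w,\rho)$, squaring the semimartingale representation, applying Doob's inequality to the martingale part and Gronwall to the drift part yields the uniform bound $\sup_{N}\BE\sup_{t\le T}\|\Xi^{N}_{t}\|^{2}_{W^{-J}_{0}(w,\rho)}<\infty$. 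Combined with an Aldous-type modulus-of-continuity estimate read off the same decomposition, this delivers relative compactness in $D_{W^{-J}_{0}(w,\rho)}[0,T]$. Any accumulation point $\bar\Xi$ then satisfies \eqref{Eq:CLT} by passing to the limit term by term; the linearisation remainder is quadratic and is controlled by $\|\mu^{N}-\bar\mu\|\cdot\|\Xi^{N}\|$, which vanishes by the LLN rate combined with the $L^2$ bound just derived.

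\textbf{Martingale part and conditional Gaussianity.} The $\sqrt{N}$-scaled martingale has two sources: the diffusion martingales contributing $\int_{0}^{t}\la\genL_{5}(f,f),\mu^{N}_{s}\ra ds$ to the predictable variation; and the compensated default martingale, whose jump at $\tau^{n}$ is $N^{-1/2}[\la\genL_{2}f,\mu^{N}\ra-f(\pp^{\NN})]$, and which, compensated against the rate $\haz^{n}1_{\{\tau^{n}>t\}}$, contributes
\begin{equation*}
\int_{0}^{t}\!\bigl[\la\genL_{6}(f,f),\mu^{N}_{s}\ra+\la\genL_{2}f,\mu^{N}_{s}\ra^{2}\la\QQ,\mu^{N}_{s}\ra-2\la\genL_{7}f,\mu^{N}_{s}\ra\la\genL_{2}f,\mu^{N}_{s}\ra\bigr]ds.
\end{equation*}
Passing $\mu^{N}\to\bar\mu$ via the LLN reproduces exactly the claimed $[\la f,\bar\mart\ra]_{t}$. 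Conditionally on $\mathcal{V}_{T}$ the $W^{n}$'s and the exponential thresholds $\ee_{n}$ are independent of $V$, jump sizes are $O(N^{-1/2})$, and the predictable variation is $\mathcal{V}_{T}$-measurable in the limit; so a conditional Rebolledo (or Jacod--Shiryaev) martingale CLT applied inside the conditional expectation shows that $\la f,\bar\mart_{t}\ra$ is $\mathcal{V}$-conditionally centered Gaussian with covariance \eqref{Eq:ConditionalCovariation}. Uniqueness of $\bar\Xi$ follows because \eqref{Eq:CLT} is linear in $\bar\Xi$ with coefficients bounded on the Hilbert scale: the difference of two solutions driven by the same $(X,V,\bar\mart)$ satisfies a homogeneous linear equation to which Gronwall in $W^{-J}_{0}(w,\rho)$ applies.

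\textbf{Main obstacle.} The hardest step is the functional-analytic calibration of the scale: picking $(w,\rho)$ and $J$ so that the multiplicative killing rate $\QQ$ and the $\haz$-linear/quadratic coefficients of $\genL_{3}^{x},\genL_{4}^{x}$ preserve the weighted Sobolev chain, and simultaneously so that the $\haz$-moments of $\mu^{N}$ that appear in the bounds propagate uniformly in $N$ under the coupled dynamics. Once this setup is in place, the semimartingale manipulations, the tightness argument and the conditional martingale CLT are a relatively routine adaptation of the interacting-particle-systems toolbox.
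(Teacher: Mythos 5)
Your proposal matches the approach sketched in the paper (and carried out in full in \cite{SpiliopoulosSiriganoGiesecke2013}): It\^o on $\la f,\mu^N\ra$, subtract the LLN evolution, linearise the $\la\QQ,\mu^N\ra\la\genL_2 f,\mu^N\ra$ product around $\bar\mu$ to produce the operator $\mathcal{G}_{X,\bar\mu}$ (your splitting of that product telescopes differently from the paper's sketch but gives the same limit), compute the predictable variation of the $\sqrt{N}$-scaled diffusion and compensated-default martingales to get the $\genL_5,\genL_6,\genL_7,\genL_2$ terms, then tightness plus a conditional martingale CLT in the weighted Sobolev scale $W^{\pm J}_0(w,\rho)$ and uniqueness via linearity and Gronwall. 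Your identification of the main obstacle — calibrating $(w,\rho)$ and $J$ to control the unbounded $\haz$-dependent coefficients — is also the paper's stated reason for working in these weighted spaces, so I see no substantive deviation or gap.
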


It is clear that if $\beta^{S}_{\NN}=0$ for all $\NN$, then the limiting distribution-valued martingale $\bar{\mart}$ is centered
Gaussian with covariance operator given by the (now deterministic) term within the expectation in (\ref{Eq:ConditionalCovariation}).

The main idea for the derivation of (\ref{Eq:CLT}) comes from the proof of the convergence to the solution of \eqref{E:weakSIPDE}.  Define
\begin{equation*} (\genL^\circ_1 f)(\pp) = \frac12 \sigma^2\haz\frac{\partial^2 f}{\partial \haz^2}(\pp) - \alpha(\haz-\bar \haz)\frac{\partial f}{\partial \haz}(\pp)\end{equation*}
for $\pp = (\haz,\alpha,\bar \haz,\sigma,\beta^C,\beta^S)$.  Let's also assume for the moment that $\beta^S_\NN=0$ for every $n\in\N$, i.e, let's neglect exposure to the exogenous
risk $X$ and focus on contagion.  Then we can write
the evolution of $\la f,\mu^N_t\ra$ as
\begin{align*} d\la f,\mu^N_t\ra
&= \frac{1}{N}\sum_{n=1}^N \genL^\circ f(\pp^N_t)1_{\{t<\tau_\NN\}}dt - \frac{1}{N}\sum_{n=1}^N f(\pp^\NN_t)\haz^\NN_t 1_{\{t<\tau^\NN\}}dt \\
&\qquad + \frac{1}{N}\sum_{n=1}^N \sum_{m=1}^N \lb f\left(\pp^\NN+\frac{\beta^C_\NN}{N}e_1\right)-f(\pp^\NN_t)\rb 1_{\{\tau_\NN<t\}}\haz^m_t 1_{\{\tau_m\le t\}}dt + dM_t\\
&\approx \la \genL_1 f,\mu^N_t\ra dt + \la \genL_2 f,\mu^N_t\ra \la \QQ,\mu^N_t\ra dt + dM_t
\end{align*}
where $M$ is a martingale which may change from line to line.  This leads to \eqref{E:weakSIPDE}, when $\beta^S_\NN=0$ for every $n\in\N$, see \cite{GieseckeSpiliopoulosSowers2011}.

To get the Gaussian correction, we see that
\begin{equation*} d\la f,\Xi^N_t\ra \approx \la \genL_1 f,\Xi^N_t\ra + \lb \la \genL_2 f,\Xi^N_t\ra \la\QQ,\mu^N_t\ra + \la \genL_2f,\bar \mu_t\ra \la \QQ,\Xi^N_t\ra\rb dt + dM_t \end{equation*}
where $M$ is a martingale.  For large $N$, $M$ should be Gaussian, in which case $\Xi^N$ is indeed a Gaussian process.
Putting the systematic risk process $X$ back into \eqref{E:maina}--\eqref{E:mainb}, one recovers the result of Theorem \ref{T:MainCLT}.

\section{Analysis of tail events: Large deviations}\label{S:LDP}

Once we have identified what is typical, we can study the structure of atypically large failure rates.
Large deviations outlines a circle of ideas and calculations for understanding the origination and transformation of rare events
(see \cite{MR722136}, \cite{MR758258}).
Large deviation arguments allow us to identify the ``dominant'' way that rare events will occur in complex systems. This is the feature that is being exploited in \cite{SpiliopoulosSowers2013}, i.e., how different sources of stochasticity can lead to system collapse.

By the discussion in Section \ref{S:LLN}, we have that
the pool has a default rate $L_{T}=1-\bar{\mu}_T(\PP)$ at time
$T$.  Let's fix $\ell>L_{T}$. Then $\lim_{N\to \infty}\BP\{L^N_T\ge
\ell\}=0$; it is a \emph{rare event} that the default rate in the
pool exceeds $\ell$.  We want to understand as much as possible
about $\{L^N_T\ge \ell\}$.

Using, the theory of large deviations, we can understand both how rare this event is, and what the
``most likely'' way is for this rare event to occur.   Events far
from equilibrium crucially depend on how rare events propagate
through the system. Large deviations gives rigorous ways to
understand these effects, and we want to use this machinery to
understand the structure of atypically large default clusters in the
portfolio.  A reference for large deviations is \cite{MR1619036}.

If we have that
\begin{equation*} \BP\{L^N\approx \varphi\} \approx \exp\left[-NI(\varphi)\right], \text{ as } N\rightarrow\infty \end{equation*}
for some appropriate functional $I$, then by the contraction principle we should have that
\begin{equation}\label{E:varprob0} \BP\{L^N_T \approx \ell\}\approx \exp\left[-N I'(\ell)\right], \text{ as } N\rightarrow\infty \end{equation}
where
\begin{equation}\label{E:varprob} I'(\ell) = \inf\{I(\varphi): \varphi(T)=\ell\} \end{equation}
(in other words, $I'$ is the large deviations rate function for $L^N_T$).
This gives us the rate at which the tail of the default rate $L^N_T$ decays as the diversification parameter grows.  More importantly,
though, the variational problem \eqref{E:varprob} gives us the \emph{preferred} way which
atypically large default rates occur.  Namely, if there is a $\varphi^*_\ell:[0,T]\to [0,1]$ such that
\begin{equation*}\label{E:optpath} I'(\ell) = I(\varphi^*_\ell) \end{equation*}
then for any $\delta>0$, the Gibbs conditioning principle suggests that
\begin{equation*}\label{E:Gibbs} \lim_{N\to \infty}\BP\{\|L^N_T-\varphi^*_\ell\|\ge \delta| L^N_T\ge \ell\} =0. \end{equation*}

Insights into large deviations of \eqref{E:maina}--\eqref{E:mainb} have been developed in \cite{SpiliopoulosSowers2013} when $\eps\downarrow 0$ and when $\eps=O(1)$ as $N\nearrow\infty$.
We note here that in the case $\eps=O(1)$, the large deviations principle is conditional on the systematic risk $X$. Such results  allow us to study the comparative effect of the systematic risk process $X$ and of the contagion feedback on the tails of the loss distribution.

Before presenting the result, let us first investigate numerically a test case, which is indicative of the kind of results that large deviations theory can give us. Apart from approximating the tail of the distribution, large deviations can give quantitative insights into the most likely path to failure of a system.

For presentation purposes and for the rest of this section, we assume that $\eps=\eps_{N}\downarrow 0$ as $N\uparrow\infty$. Consider a heterogeneous  test portfolio composed initially of $N=200$ names. Let us assume that we can separate the names in the portfolio into three types: Type A is $16.67\%$ of the names, Type B is $33.33\%$ of the names and Type C is $50\%$ of the names. For presentation purposes, we assume that all parameters but the contagion parameter are the same among the different types. In particular, we have the following choice of parameters.
\begin{table}[ht!]
\centering
\begin{tabular}{|l|c|c|c|c|c|c|c|}
\hline
 &  $\alpha$ & $\bar\lambda$ & $\sigma$ & $\lambda_{0}$ & $\gamma$ & $\beta^S$ & $\beta^C$  \\ \hline\hline
Type A  & 0.5 & 2 & 0.5 & 0.2 & 1 & 1 & 10 \\ \hline
Type B  & 0.5 & 2 & 0.5 & 0.2 & 1 & 1 & 3 \\ \hline
Type C  & 0.5 & 2 & 0.5 & 0.2 & 1 & 1 & 1 \\ \hline
\end{tabular}
\caption{\label{parameters} Parameter values for a test portfolio composed of three types of assets. We take $\eps_{N}=\frac{1}{\sqrt{N}}$.}
\end{table}

It is instructive to compare the different cases, based on whether there are contagion effects in the default intensities or not. In particular, we compare two different cases,
 (a) Systematic risk only: $\beta^S\not=0, \beta^C=0$, and (b) Systematic risk and contagion: $\beta^S\not=0, \beta^C\not=0$. In each case, the time horizon is $T=1$.

Using the methods of Section \ref{S:LLN}, one can compute that the typical loss in such a pool at time $T=1$. If contagion effects are not present, i.e., if $\beta^C_{A}=\beta^C_{B}=\beta^C_{C}=0$, then the typical loss in such a portfolio at time $T=1$  is $L_{T}=42.5\%$. If on the other hand, contagion (feedback) effects are present and the $\beta^C$ parameters take the values of Table \ref{parameters}, then the typical loss in such a portfolio at time $T=1$  has been increased to  $L_{T}=72.1\%$. In Figure \ref{fig:LDP_RateFunction}, we plot the large deviations rate functions for each of the two different cases. As we saw in the beginning of this section, the rate function governs the asymptotics of the tail of the loss distribution. Notice that in every case, the rate function is convex and it becomes zero at the corresponding law of large numbers.

\begin{figure}[ht!]
\begin{center}
\includegraphics[width=7 cm, height=12 cm, angle=-90]{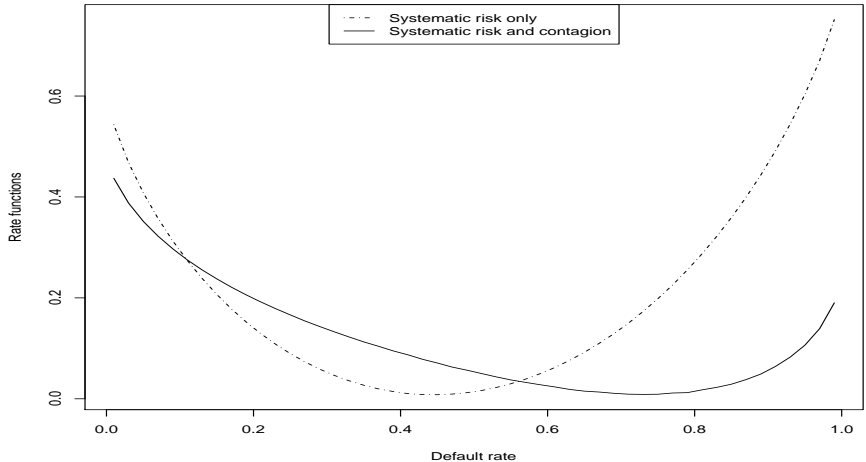}
\caption{\label{fig:LDP_RateFunction} Rate function governing the log-asymtptotics of the tail of the loss distribution. }
\end{center}
\end{figure}

Moreover, since the contagion parameter of Type A is higher than the contagion parameter for Type B or C, one expects that names of Type A will be more prompt to the contagious impact of defaults. Indeed, after computing the rate function and the associated extremals, as defined by large deviations theory, one gets the most likely paths to failure as seen in Figures \ref{fig:LDP1}-\ref{fig:LDP2}. The $\varphi(t)$ trajectories correspond to the contagion extremals for each of the three types, whereas the $\psi(t)$ corresponds to the systematic risk extremal.

\begin{figure}[ht!]
\begin{center}
\includegraphics[width=7 cm, height=12 cm, angle=-90]{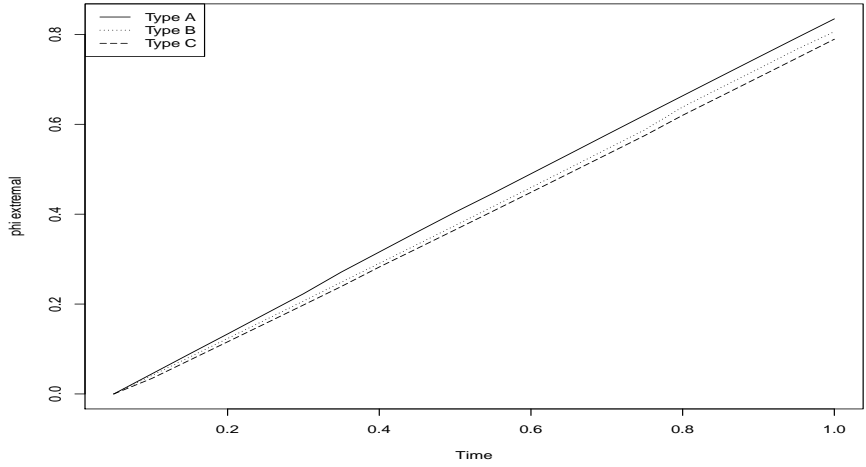}
\caption{\label{fig:LDP1} Optimal $\varphi(t)$ trajectories for the three different types in the pool for $t\in[0,1]$ and $\ell=0.81$. }
\end{center}
\end{figure}

\begin{figure}[ht!]
\begin{center}
\includegraphics[width=7 cm, height=12 cm, angle=-90]{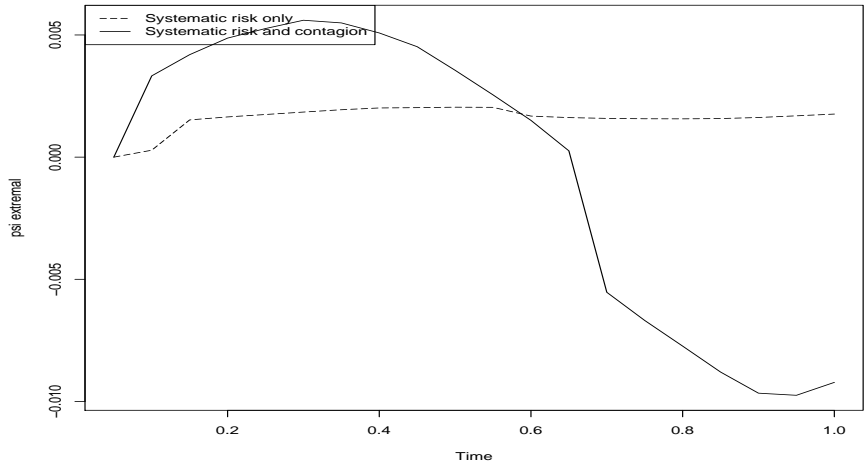}
\caption{\label{fig:LDP2} Comparing optimal $\psi(t)$ trajectories in the case of absence and presence of the contagion effects for $t\in[0,1]$ and $\ell=0.81$. }
\end{center}
\end{figure}

One can make two conclusions out of Figures \ref{fig:LDP1}-\ref{fig:LDP2}. The first conclusion is related to the $\varphi$ extremals (Figure \ref{fig:LDP1}).
We notice that at any given time $t$, the extremal for Type A is bigger than the extremal for
Type B, which in turn is bigger than the extremal of Type C. This implies that unlikely large losses for components of Type A are more likely than unlikely large losses for components of Type B, which are more likely than large losses for components of Type C. Thus, components of
Type A affect the pool more than components of Type B, which in turn affect the pool more than components of Type C even though Type A composes $16.67\%$ of the pool, whereas Type B, composes $33.33\%$ of the pool and Type C composes $50\%$ of the pool. The second conclusion is related to the $\psi$ extremals (Figure \ref{fig:LDP2}). We notice that the effect of the systematic risk is most profound in the beginning but then its significance decreases.

Namely, if a large cluster were to occur, the systematic risk factor is likely to play an important role in the beginning, but then the contagion effects become more important. Assets of Type A are likely to contribute to the default clustering effect more, followed by assets of Type B and the ones that will contribute the least to the default cluster are assets of Type C.

As it is also seen in the numerical experiments done in \cite{SpiliopoulosSowers2013}, the large deviations analysis help quantify the effect that the contagion and  the systematic risk factor have on the behavior of the extremals (the most likely path to failure). An understanding of the role of the preferred paths to  large default rates and the most likely ways in which contagion and systematic risk combine to lead to large default rates would give useful insights into how to optimally hedge against such events.

Let us next proceed by motivating the development of the large deviations principle for the default timing model \eqref{E:maina}--\eqref{E:mainb} that is considered in this paper.

We denote scenarios, i.e., defaults, that are not in $[0,T]$ by an abstract point $\star$ not in $[0,T]$ and define the Polish space
\begin{equation*} \TInt= [0,T]\cup \{\star\} \end{equation*}

To motivate things, let's first assume for simplicity that $\beta^C=\beta^S=0$ and that the system is homogeneous, i.e., that
$\pp^\NN=\pp$ for all $n$. Define
\begin{equation*} \label{E:main2}
d\haz_t = -\alpha (\haz_t-\bar \haz)dt + \sigma \sqrt{\haz_t}dW_t \qquad t>0
\end{equation*}
with $\haz_0=\haz_\circ$.
This Feller diffusion will represent the conditional intensity of a ``randomly-selected'' component of our (homogeneous and independent) system.  Define the
measure $\mu_0\in\Pspace(\R_+)$ by setting
\begin{equation*}\label{E:hazz}
\mu_{0}[0,t] = 1- \BE\left[\exp\left[-\int_{0}^t  \haz_s ds\right]\right]
\end{equation*}
for all $t>0$; $\mu_0$ is the common law of the default times $\tau_n$'s.

In the independent case, i.e., when $\beta^C=0$,  standard Sanov's theorem \cite{MR1619036}, implies that $\{d\Fail^N\}_{N\in \N}$ has a large deviations
principle with rate function
\begin{equation*}
H(\nu,\mu_{0})= \int_{t\in \TInt} \ln \frac{d\nu}{d \mu_0}(t) \nu(dt)
\end{equation*}
if $\nu \ll  \mu_0$ and $H(\nu,\mu_{0})=\infty$ if $\nu\not \ll \mu_0$
(i.e., $H(\nu,\mu_{0})$ is the relative entropy of $\nu$ with respect to $\mu_0$). By the contraction principle, the rate function for $\Fail^{N}_{T}$ is
\begin{equation*}
I^{\text{ind},\prime}(\ell) =\inf\lb H(\nu,\mu_{0}): \text{$\nu\in \Pspace(\R_+)$, $\nu[0,t]=\varphi(t)$ for all $t\in[0,T]$ and $\nu[0,T]=\ell$}\rb\label{Eq:IndependentCaseLDP}
\end{equation*}

In the independent case, we can actually compute both the extremal $\varphi$ that achieves the infimum and the corresponding rate function $I^{\text{ind},\prime}(\ell)$ in closed form.

Assume that $\mu_{0}[0,T]\in (0,1)$ and $\ell\in (0,1)$.  Fix $\nu\in \Pspace(\TInt)$
such that $\nu[0,T]=\ell$.  Define
\begin{equation*} \mu_{0,-}(A)= \frac{\mu_{0}(A\cap [0,T])}{\mu_{0}[0,T]} \qquad \text{and}\qquad \nu_{-}(A) = \frac{\nu(A\cap [0,T])}{\ell} \end{equation*}
for all $A\in \Borel[0,T]$.  Then $\mu_-$ and $\nu_-$ are in $\Pspace[0,T]$.  We can write that
\begin{equation}\label{E:optH} H(\nu,\mu_{0}) = \ell\lb \hbar(\nu_-,\mu_{0,-}) + \ln \frac{\ell}{\mu_{0}[0,T]}\rb + \ln \frac{\nu\{\pt\}}{\mu_{0}\{\pt\}}\nu\{\pt\} \end{equation}
where $\hbar$ is entropy on $\Pspace[0,T]$.  We can minimize the $\hbar$ term by setting $\nu_-=\mu_{0,-}$, and we get that
\begin{eqnarray}\label{E:iprimereduce} I^{\textrm{ind},\prime}(\ell) &=& \ell \ln \frac{\ell}{\mu_{0}[0,T]} + (1-\ell)\ln \frac{1-\ell}{\mu_{0}\{\pt\}}\\
&=&\ell \ln \frac{\ell}{\mu_{0}[0,T]} + (1-\ell)\ln \frac{1-\ell}{1-\mu_{0}[0,T]}.\nonumber
 \end{eqnarray}
This is in fact obvious; $ L^N_T = \frac{1}{N}\sum_{n=1}^N 1_{\{ \tau_n\le T\}}$, and in this case the $1_{\{\tau_n\le T\}}$'s are i.i.d. Bernoulli random variables
with common bias $ \mu_{0}[0,T]$.
The rate function $I^{\textrm{ind},\prime}(\ell)$ of \eqref{E:iprimereduce} is the entropy of Bernoulli coin flips.  Of more interest, however, is the optimal path.
In setting $\nu_-=\mu_-$ in \eqref{E:optH}, we essentially identify the optimal path
\begin{equation*}\label{E:optimalpath}
\varphi(t) = \ell \frac{\mu_{0}[0,t]}{\mu_{0}[0,T]},
\end{equation*}
where the last relation holds since we also require $\varphi(T)=\ell$.

It turns out that one can extend this result to give a generalized Sanov's theorem for the case $\beta^C> 0$,
where $d\Fail^N$ feeds back into the dynamics of the $\haz^n$'s. The case $\beta^{S}>0$ can be treated using a conditioning argument and the well developed theory of large deviations for small noise diffusions. For the heterogeneous case, one needs an additional variational step which minimizes over all the possible  ways that losses are distributed among systems of different types.
Even though an explicit closed form expression for the extremals and for the corresponding rate function is no longer possible, one can still rely on numerically computing them.
Let us make this discussion precise.

To fix the discussion, let us  assume (see \cite{SpiliopoulosSowers2013} for the general case) that the exogenous risk $X$ is of Ornstein-Uhlenbeck type, i.e.,
\begin{equation*}\begin{aligned}
dX_t&=-\gamma X_tdt+dV_t\\
X_0&=x_{\circ}\end{aligned}\label{Eq:SpecificExogeneousRisk}
\end{equation*}

Let $W^*$ be a reference Brownian motion.  Fix  a name in the pool $\pp=(\lambda_{\circ},\alpha,\bar \lambda,\sigma,\beta^C,\beta^S)\in \Types$ and time horizon $T>0$.

The Freidlin-Wentzell theory of large deviations for SDE's gives us a natural starting point.   In
the Freidlin-Wentzell analysis, a dominant ODE is subjected to a small diffusive perturbation; informally, the Freidlin-Wentzell theory
tells us that if we want to find the probability that the randomly-perturbed path is close to a reference trajectory, we should use that reference trajectory in the dynamics.
This leads to the correct LDP rate function for the original SDE.  If we want to find the asymptotics
of the probability that $\left(d\Fail^N\approx d \varphi, \eps_{N} dX\approx d\psi \right)$ for some absolutely continuous functions $\varphi$ and $\psi$, i.e., $\varphi,\psi\in AC\left([0,T],\mathbb{R}\right)$, we should consider the stochastic hazard functions
\begin{equation*} \label{E:DuffiePanSingletonequation} \begin{aligned}
 d\lambda^{\varphi,\psi}_t &=  -\alpha (\lambda^{\varphi,\psi}_t-\bar \lambda)dt + \sigma \sqrt{\lambda^{\varphi,\psi}_t}dW^*_t +\beta^C d\varphi(t) + \beta^S \lambda^{\varphi,\psi}_t d\psi(t) \qquad t\in [0,T]\\
\lambda_0 &= \lambda_\circ. \end{aligned}\end{equation*}
This will represent the conditional intensity of a ``randomly-selected'' name in our pool.
Define next
\begin{equation*}\label{E:fdef} f_{\varphi,\psi}^{{\pp}}(t) = \BE\left[\lambda^{\varphi,\psi}_t\exp\left[-\int_{s=0}^t \lambda^{\varphi,\psi}_s ds\right]\right], \end{equation*}
where, we have used the superscript $\pp$ to denote the dependence on the particular type.  Then for every $t\in [0,T]$ we have that
\begin{equation*}\label{E:fdens} \int_{s=0}^t f_{\varphi,\psi}^{{\pp}}(s)ds = 1-\BE\left[\exp\left[-\int_{s=0}^t \lambda^{\varphi,\psi}_s ds\right]\right] = \BP\lb \int_{s=0}^t \lambda^{\varphi,\psi}_s ds>\ee\rb \end{equation*}
where $\ee$ is an exponential(1) random variable which is independent of $W^*$.  In other words, $f_{\varphi,\psi}^{{\pp}}$ is the density (up to time $T$) of a default time
whose conditional intensity is $\lambda^{\varphi,\psi}$. In fact, due to the affine structure of the model, we have an explicit expression for $f_{\varphi,\psi}^{{\pp}}$ (see Lemma 4.1 in \cite{SpiliopoulosSowers2013}).

For given trajectories $\varphi$ and $\psi$ in $AC([0,T];\R)$,
define $\mu_{\varphi,\psi}^{{\pp}}\in \Pspace(\TInt)$ as
\begin{equation*}\label{E:muDef}  \mu_{\varphi,\psi}^{{\pp}}(A) = \int_{t\in A\cap [0,T]}f_{\varphi,\psi}^{{\pp}}(t)dt + \delta_{\pt}(A)\lb 1- \int_{0}^T f_{\varphi,\psi}^{{\pp}}(t)dt\rb  \end{equation*}
for all $A\in \Borel(\TInt)$.

At a heuristic level one can derive the large deviations principle as follows. Let us assume that we can establish that
\begin{equation*}
\BP\{L^N \approx \varphi|X^N \approx \psi\} \approx \exp\left[-N I^{\circ}(\varphi,\psi)\right]
\end{equation*}
and that  $\left\{ X^N_{\cdot}=\eps_{N}X_{\cdot}, N<\infty\right\}$ also has large deviations principle in $C([0,T];\mathbb{R})$ with action functional $J_{X}$; i.e.,
\begin{equation*}
\BP\lb X^N\approx \psi\rb \approx \exp\left[-\frac{1}{\eps^2_N}J_{X}(\psi)\right]
\end{equation*}
as $N\nearrow \infty$. Then, we should have that
\begin{equation*}\label{E:genLDP}
 \BP\{L^N\approx \varphi,\, X^N \approx \psi\} \approx \exp\left[-N I^{\circ}(\varphi,\psi) - \frac{1}{\eps_N^2}J_{X}(\psi)\right].
 \end{equation*}
In fact, the previous heuristics can be carried out rigorously and in the end one derives the following rigorous large deviations result.
\begin{theorem}[Theorem 3.8 in \cite{SpiliopoulosSowers2013}]\label{T:LDPHeterogeneous2}
Consider the system defined in \eqref{E:maina}-\eqref{E:mainb} with $\lim_{N\rightarrow\infty}\eps_{N}=0$ such that $\lim_{N\rightarrow\infty}N\eps^{2}_{N}= c\in(0,\infty)$ and let $T<\infty$. Under the appropriate assumptions the family $\{L^{N}_{T},N\in\mathbb{N}\}$ satisfies the large deviation principle, with rate function
\begin{align*}
I^{\prime}(\ell)&=\inf\left\{ I(\varphi,\psi): \varphi\in C\left({\Types}\times [0,T]\right),\psi\in C\left([0,T]\right), \psi(0)= \varphi({\pp},0)=0,\right.\nonumber\\
&\qquad\qquad\left.\bar{\varphi}(s)=\int_{{\Types}}\varphi({\pp},s)U(d{\pp}), \bar{\varphi}(T)=\ell \right\}
\end{align*}
where if  $\varphi\in AC\left({\Types}\times [0,T]\right),\psi\in AC\left([0,T]\right), \psi(0)=\varphi({\pp},0)=0$, then
\begin{equation*} \label{E:exLDP}
\begin{aligned}
I(\varphi,\psi)&=\int_{{\Types}}H\left(\varphi({\pp}),\mu^{{\pp}}_{\bar{\varphi},\psi}\right)U(d{\pp})+\frac{1}{c}J_{X}(\psi)
\end{aligned}
\end{equation*}
and $I(\varphi,\psi)=\infty$ otherwise. Here, $J_{X}(\psi)$ is the rate function for the process $\{\eps_{N} {X}^{N}, N<\infty\}$. Namely, for $\psi\in AC\left([0,T];\R\right)$ with $\psi(0)=0$ we have
\begin{equation*}
J_{X}(\psi)=\frac{1}{2}\int_{0}^{T}\left|\dot{\psi}(s)+\gamma \psi(s)\right|^{2}ds
\end{equation*}
and $J_{X}(\psi)=\infty$ otherwise.  $I^{\prime}(\ell)$ has compact level sets.
\end{theorem}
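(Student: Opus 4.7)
My plan is to derive the theorem by a two-scale argument: first condition on the path of the systematic risk $X^{N}_{\cdot}=\eps_{N}X_{\cdot}$, then combine the conditional LDP for $L^{N}$ with a Freidlin--Wentzell LDP for $X^{N}$, working at the common speed $N$ that becomes available thanks to the scaling assumption $N\eps_{N}^{2}\to c$. The contraction principle applied to the continuous map $(\varphi,\psi)\mapsto \bar\varphi(T)=\int_{\Types}\varphi(\pp,T)U(d\pp)$ then yields the announced rate function $I^{\prime}(\ell)$.

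\textbf{Step 1: Conditional LDP given the systematic risk.} Freeze a trajectory $\psi\in AC([0,T])$ with $\psi(0)=0$ and work conditionally on $\eps_{N}X\approx\psi$. Under this conditioning the stochastic perturbation $\eps_{N}\beta^{S}_{\NN}\lambda^{\NN}dX_{t}$ reduces to a drift driven by $\dot\psi$, so given additionally an empirical loss trajectory $\bar\varphi$, the intensity of each name of type $\pp$ is driven by an autonomous SDE whose law coincides with that of $\lambda^{\bar\varphi,\psi}$, and the conditional law of its default time is $\mu^{\pp}_{\bar\varphi,\psi}$. A self-consistency argument then says that, conditionally, the defaults decouple into an (almost) i.i.d. system within each type, with the feedback closed by the constraint $\bar\varphi(s)=\int_{\Types}\varphi(\pp,s)U(d\pp)$. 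Stratifying by type using Assumption \ref{A:regularity} and applying Sanov's theorem inside each type gives a conditional LDP at speed $N$ for the type-indexed empirical distribution $\varphi(\pp,\cdot)$, with rate
\begin{equation*}
I^{\circ}(\varphi,\psi)=\int_{\Types}H\bigl(\varphi(\pp),\mu^{\pp}_{\bar\varphi,\psi}\bigr)U(d\pp).
\end{equation*}
Rigorously I would implement this through a weak-convergence/Dupuis--Ellis representation: express $\mathbb{E}\exp[-N F(L^{N})]$ via its variational formula, use tightness of minimizing controls and the closed-form density $f^{\pp}_{\varphi,\psi}$ available from the affine structure, and identify any subsequential limit via the McKean--Vlasov-type self-consistent equation for $\bar\varphi$.

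\textbf{Step 2: Freidlin--Wentzell and combination of scales.} The scaled Ornstein--Uhlenbeck process $\eps_{N}X$ satisfies the classical small-noise LDP at speed $1/\eps_{N}^{2}$ with rate $J_{X}(\psi)=\tfrac{1}{2}\int_{0}^{T}|\dot\psi(s)+\gamma\psi(s)|^{2}ds$. The scaling hypothesis $N\eps_{N}^{2}\to c$ rephrases this as an LDP at speed $N$ with rate $\tfrac{1}{c}J_{X}(\psi)$. Since the driving noises $V$, $\{W^{n}\}$ and the triggers $\{\ee_{n}\}$ are mutually independent, combining the conditional LDP of Step 1 with this unconditional LDP (via a standard Laplace-principle/fibered-LDP argument) produces a joint LDP for $(L^{N},\eps_{N}X)$ at speed $N$ with rate
\begin{equation*}
I(\varphi,\psi)=I^{\circ}(\varphi,\psi)+\tfrac{1}{c}J_{X}(\psi).
\end{equation*}
Because $L^{N}_{T}=\bar\varphi^{N}(T)$ is a continuous functional of this pair, the contraction principle yields $I^{\prime}(\ell)=\inf\{I(\varphi,\psi):\bar\varphi(T)=\ell\}$. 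Compactness of the level sets of $I^{\prime}$ follows from compactness of the level sets of $J_{X}$, lower semicontinuity of relative entropy, and the uniform boundedness of $L^{N}_{T}\in[0,1]$.

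\textbf{Main obstacle.} The most delicate step is Step 1: the generalized Sanov statement in the presence of the self-interacting feedback $\beta^{C}_{\NN}dL^{N}$. Unlike classical Sanov, the individual default indicators are coupled through $L^{N}$, so one must simultaneously (i) establish exponential tightness of the type-indexed empirical laws in the Polish space of sub-probability measures on $\Types\times\TInt$; (ii) identify any subsequential limit as solving the self-consistent fixed point $\bar\varphi(s)=\int\varphi(\pp,s)U(d\pp)$ with the corresponding default law $\mu^{\pp}_{\bar\varphi,\psi}$; and (iii) match the upper and lower bounds at the relative-entropy rate. The affine structure of the intensity SDE and the resulting explicit expression for $f^{\pp}_{\varphi,\psi}$ (Lemma 4.1 of \cite{SpiliopoulosSowers2013}) make the fixed point well-posed and give the quantitative continuity estimates needed to close the argument.
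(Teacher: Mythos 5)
Your proposal matches the paper's own outline: condition on the systematic risk path and use a generalized Sanov argument (made type-resolved via Assumption \ref{A:regularity}, and closed by the self-consistency constraint $\bar\varphi(s)=\int_{\Types}\varphi(\pp,s)U(d\pp)$) to obtain the conditional rate $I^{\circ}$, combine with the Freidlin--Wentzell LDP for $\eps_{N}X$ at the common speed $N$ made available by $N\eps_{N}^{2}\to c$, and contract to the scalar $L^{N}_{T}$. The paper in fact presents exactly this heuristic and defers the rigorous version (the generalized Sanov theorem in the presence of the feedback $\beta^{C}dL^{N}$, which you correctly flag as the main obstacle) to \cite{SpiliopoulosSowers2013}, so you have reproduced the intended argument.
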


If the heterogeneous portfolio is composed by $K$ different types of assets with homogeneity within each type, then Theorem \ref{T:LDPHeterogeneous2} simplifies to the following expression.

For $\xi,\varphi,\psi\in AC([0,T])$ let us define the functional
\[
g^{{\pp}}(\xi,\varphi,\psi)=\int_{0}^T \ln\left(\frac{\dot \xi(t)}{f^{{\pp}}_{\varphi,\psi}(t)}\right)\dot \xi(t)dt+ \ln \left(\frac{1-\xi(T)}{1-\int_{0}^T f^{{\pp}}_{\varphi,\psi}(t)dt}\right)\left(1- \xi(T)\right)
\]
Due to the affine structure of the model, we have an explicit expression for $f_{\varphi,\psi}^{{\pp}}$ (see Lemma 4.1 in \cite{SpiliopoulosSowers2013}).

Assume that $\kappa_{i}\%$ of the names are of type $A_{i}$ with $i=1,\cdots, K$ and $\sum_{i=1}^{K}\kappa_{i}=100$.
Setting $\varphi({\pp},s)=\sum_{i=1}^{K}\frac{\kappa_{i}}{100}\varphi_{A_{i}}(s)\chi_{\{{\pp}_{A_{i}}\}}(\pp)$, we get the following simplified expression for the rate function
\begin{align*}
I^{\prime}(\ell)&=\inf\left\{\sum_{i=1}^{K}\frac{\kappa_{i}}{100}g^{{\pp}_{A_{i}}}(\varphi_{A_{i}}, \varphi,\psi)+\frac{1}{c}J_{X}(\psi): \varphi(t)=\sum_{i=1}^{K}\frac{\kappa_{i}}{100}\varphi_{A_{i}}(t) \textrm{ for every }t\in[0,T]\right.\nonumber\\
&\qquad\qquad\qquad\left. \varphi(T)=\ell, \varphi_{A_{i}}(0)=\psi(0)=0, \varphi_{A_{i}},\psi\in AC([0,T]) \textrm{ for every }i=1,\cdots, K\right\}.\nonumber
\end{align*}

An optimization algorithm can then be employed to solve the minimization problem associated with $I^{\prime}(\ell)$ and compute the extremals $\varphi_{A_{i}}$
for $i=1,\cdots,K$ and $\psi$. This is the formula that the numerical example presented in Figures \ref{fig:LDP1}-\ref{fig:LDP2} was based on. In the numerical example that was
considered there we had three types, i.e., $K=3$.

The large deviations results have a number of important applications. Firstly, they lead to an analytical approximation of the tail of the distribution of the failure rate $L^{N}$ for large systems. These approximations complement the first- and second- order approximations  suggested by the law of large numbers and fluctuations analysis of Sections \ref{S:LLN} and \ref{S:CLT} respectively and facilitates the estimation of the likelihood of systemic collapse. Secondly, the large deviations results provide an understanding of the ``preferred" ways of collapse, which can also be used to design ``stress tests" for the system. In particular, this understanding can guide the selection of meaningful stress scenarios to be analyzed. Thirdly, they can motivate the design of asymptotically efficient importance sampling schemes for the tail of the portfolio loss. We discuss some of the related issues in Section \ref{S:IS}.

\section{Monte Carlo methods for estimation of tail events: Importance sampling}\label{S:IS}

Suppose we want to computationally simulate $\BP\{\Fail^N_T\ge \ell\}$, where $\lim_{N\to \infty}\BP\{L^N_T\ge
\ell\}=0$ again holds.
Accurate estimates of such rare-event probabilities are important in many applications areas of our system \eqref{E:maina}--\eqref{E:mainb}, including credit risk management, insurance, communications and reliability. Monte Carlo methods are widely used to obtain such estimates in large complex systems such as ours; see, for example, \cite{bassamboo-jain, bassamboo-juneja-zeevi, carmona-crepey,  vestal-carmona-fouque, sisr,  gkmt, Glasserman,GlassermanLi2005, zhang-etal}.

Standard Monte Carlo sampling techniques perform very poorly in estimating rare events (for which, by definition, most samples can be discarded).  Importance sampling, which involves a change of measure, can be used to address this issue. In general, large deviations theory provides an optimal way to `tilt' measures. The variational problems identified by large deviations usually lead to measure transformations under which pre-specified rare events become much more likely, but which give unbiased estimates of probabilities of interest; see for example \cite{AsmussenGlynn2007,Bucklew2004,DupuisWang2004,DupuisWang2007,DupuisSpiliopoulosWang,GlassermanWang1997,Sadowsky1996}.

Let $\Gamma^N$ be any unbiased estimator of $\BP\{\Fail^N_T\ge \ell\}$
that is defined on some probability space with probability measure
$\BQ$. In other words, $\Gamma^N$ is a random variable such that $\BE^{\BQ}\Gamma^N=\BP\{\Fail^N_T\ge \ell\}$,
where $\BE^{\BQ}$ is the expectation operator associated with $\BE$. In our setting, it takes the form
\begin{equation*}\Gamma^N=1_{\{\Fail^N_T> \ell\}}\frac{d\BP}{d\BQ},\end{equation*}
where $\frac{d\BP}{d\BQ}$ is the associated Radon-Nikodym derivative.

Importance sampling involves the generation of independent copies of
$\Gamma^N$ under $\BQ$; the estimate is the sample mean. The specific
number of samples required depends on the desired accuracy, which is measured
by the variance of the sample mean. However, since the samples are independent
it suffices to consider the variance of a single sample. Because of
unbiasedness, minimizing the variance is equivalent to minimizing the second
moment.
An application of Jensen's inequality, shows that if
\begin{equation*}
\liminf_{N\to \infty}\frac{1}{N}\ln\BE^{\BQ}(\Gamma^N)^{2} = -2I'(\ell),
\end{equation*}
then $\Gamma^N$ achieves this best decay rate, and is said to be
\textit{asymptotically optimal}. One wants to choose $\BQ$ such that asymptotic optimality is attained.

To motivates things let us assume for the moment that $\beta^{C}=\beta^{S}=0$ and that the system is homogeneous, i.e., that
$\pp^\NN=\pp$ for all $n$. In the independent and homogeneous case, $\Xi_{n}=1_{\{\tau_n\le T\}}$ are i.i.d. random variables such that
for every $t\in [0,T]$
\begin{equation*}%\label{E:fdens}
\BP\lb \tau_n\le t \rb= \BP\lb \int_{0}^t \lambda^{0,0}_s ds>\ee\rb= 1-\BE\left[\exp\left[-\int_{0}^t \lambda^{0,0}_s ds\right]\right]=\int_{0}^t f_{0,0}(s)ds
 \end{equation*}

For notational convenience, we shall define
\begin{equation*}
p=\int_{0}^T f_{0,0}(s)ds
\end{equation*}

It is easy to see that,
 \begin{equation*}
 NL_{T}^{N}\sim \text{Binomial}(N, p)
 \end{equation*}

 To minimize the variance, we need to increase the probability of defaults. Define

\begin{equation*}
\Lambda^{N}(\theta;t)=\ln \BE\left[e^{\theta L_{t}^{N}}\right]\label{Eq:LogMmt2prelimit}
\end{equation*}

A simple computation shows that
\begin{equation*}
\bar{\Lambda}(\theta;t)=\lim_{N\rightarrow\infty}\frac{1}{N}\Lambda^{N}(N\theta;t)=\ln\left(p\left(e^{\theta}-1\right)+1\right)
\end{equation*}

Define
\begin{equation*}
p_{\theta}=\frac{p e^{\theta}}{1+p (e^{\theta}-1)}
\end{equation*}

Clearly $p_{0}=p$. Notice that the density of a $\text{Binomial}(N, p)$ with respect to a $\text{Binomial}(N, p_{\theta})$ is
\begin{eqnarray*}
\mathcal{Z}_{\theta}&=&\prod_{n=1}^{N}\left(\frac{p}{p_{\theta}}\right)^{\Xi_{n}}\left(\frac{1-p}{1-p_{\theta}}\right)^{1-\Xi_{n}}=\prod_{n=1}^{N}\left[\left(1+p(e^{\theta}-1)\right)e^{-\theta \Xi_{n}}\right]\nonumber\\
&=&e^{N\left(-\theta L^{N}_{T}+\bar{\Lambda}(\theta;T)\right)}
\end{eqnarray*}

Therefore, for $\theta$ fixed,  the suggestion is to simulate under a new change of measure, under which  $NL_{T}^{N}\sim \text{Binomial}(N, p_{\theta})$ and to return the estimator
\begin{equation*}
\Gamma=\frac{1}{M}\sum_{i=1}^{M}1_{\{L^{N,i}_{T}>\ell\}} e^{N\left(-\theta L^{N,i}_{T}+\bar{\Lambda}(\theta;T)\right)}
\end{equation*}

It is clear that this estimator is unbiased. We want to choose $\theta$ that minimizes the variance, or equivalently the second moment. For this purpose, we define the second moment
\begin{equation*}
Q(\ell,\theta)=\BE_{\theta}\Gamma^{2}=\BE_{\theta}\left[1_{\{L_{T}>\ell\}}e^{2N\left(-\theta L^{N}_{T}+\bar{\Lambda}(\theta;T)\right)}\right]
\end{equation*}

Notice that
\begin{equation*}
-\frac{1}{N}\ln Q(\ell,\theta)\geq -2\frac{1}{N}N \left(-\theta \ell+\bar{\Lambda}(\theta;T)\right)=2(\theta \ell-\bar{\Lambda}(\theta;T))
\end{equation*}

Due to convexity of $\bar{\Lambda}(\theta;T)$, we have that the maximizer over $\theta\in[0,\infty)$ of the lower bound is at $\theta^{*}$ such that $\frac{\partial \bar{\Lambda}(\theta^{*};T)}{\partial \theta}=\ell$. In particular, (recall that $\frac{\partial\bar{\Lambda}(0;T)}{\partial\theta}=p$) we have
\begin{equation*}
\theta^{*}=\begin{cases}
\ln\frac{\ell(1-p)}{p(1-\ell)}, & \textrm{ if } \ell>p\\
0, & \textrm{ if } \ell<p
\end{cases}
\end{equation*}

This construction means that under the new measure, we  have
\begin{equation*}
\BP_{\theta^{*}}\lb \tau_n\le T \rb=p_{\theta^{*}}=\ell.
 \end{equation*}
 In fact, we have the following theorem.
\begin{theorem}
Let $\theta^{*}>0$ such that $\frac{\partial \bar{\Lambda}(\theta^{*};T)}{\partial \theta}=\ell$. Then asymptotic optimality holds, in the sense that
\begin{equation*}
\lim_{N\rightarrow\infty}-\frac{1}{N}\ln Q(\ell,\theta^{*})=2 I^{\textrm{ind},\prime}(\ell)
\end{equation*}
where $I^{\textrm{ind},\prime}(\ell)$ is defined in \eqref{E:iprimereduce}.
\end{theorem}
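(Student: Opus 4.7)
The strategy is to sandwich $-\tfrac{1}{N}\ln Q(\ell,\theta^{*})$ between matching asymptotic bounds, both of which turn out to equal $2I^{\textrm{ind},\prime}(\ell)$. The heart of the argument is classical for i.i.d.\ exponential-twist importance sampling, combined with the Legendre duality identity already embedded in the construction of $\theta^{*}$.

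\textbf{Step 1 (Legendre identification).} First I would verify that the Fenchel--Legendre transform of $\bar\Lambda(\cdot;T)$ evaluated at $\ell$ equals the rate function. From $\bar\Lambda(\theta;T)=\ln\bigl(1+p(e^{\theta}-1)\bigr)$ and $e^{\theta^{*}}=\tfrac{\ell(1-p)}{p(1-\ell)}$, a direct computation gives $1+p(e^{\theta^{*}}-1)=\tfrac{1-p}{1-\ell}$, hence
\begin{equation*}
\theta^{*}\ell-\bar\Lambda(\theta^{*};T)=\ell\ln\frac{\ell(1-p)}{p(1-\ell)}-\ln\frac{1-p}{1-\ell}=\ell\ln\frac{\ell}{p}+(1-\ell)\ln\frac{1-\ell}{1-p}=I^{\textrm{ind},\prime}(\ell).
\end{equation*}
This is the key algebraic identity that will make the two bounds meet.

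\textbf{Step 2 (Lower bound on $-\tfrac{1}{N}\ln Q$).} This is essentially the deterministic bound already sketched before the theorem. Since $\theta^{*}>0$ (because $\ell>p$), on the event $\{L^{N}_{T}>\ell\}$ we have $-\theta^{*}L^{N}_{T}+\bar\Lambda(\theta^{*};T)\leq -\theta^{*}\ell+\bar\Lambda(\theta^{*};T)$, so
\begin{equation*}
Q(\ell,\theta^{*})\leq e^{-2N(\theta^{*}\ell-\bar\Lambda(\theta^{*};T))}\,\BP_{\theta^{*}}\{L^{N}_{T}>\ell\}\leq e^{-2NI^{\textrm{ind},\prime}(\ell)},
\end{equation*}
which gives $\liminf_{N\to\infty}-\tfrac{1}{N}\ln Q(\ell,\theta^{*})\geq 2I^{\textrm{ind},\prime}(\ell)$.

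\textbf{Step 3 (Upper bound on $-\tfrac{1}{N}\ln Q$ via Jensen plus Cram\'er).} For the reverse direction I use $\mathbb{E}_{\theta^{*}}\Gamma^{2}\geq(\mathbb{E}_{\theta^{*}}\Gamma)^{2}$ and the unbiasedness $\mathbb{E}_{\theta^{*}}\Gamma=\BP\{L^{N}_{T}>\ell\}$, obtaining $Q(\ell,\theta^{*})\geq\bigl(\BP\{L^{N}_{T}>\ell\}\bigr)^{2}$. Because $NL^{N}_{T}\sim\textrm{Binomial}(N,p)$, Cram\'er's theorem gives $\lim_{N\to\infty}-\tfrac{1}{N}\ln\BP\{L^{N}_{T}>\ell\}=I^{\textrm{ind},\prime}(\ell)$ (for $\ell>p$). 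Therefore $\limsup_{N\to\infty}-\tfrac{1}{N}\ln Q(\ell,\theta^{*})\leq 2I^{\textrm{ind},\prime}(\ell)$.

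\textbf{Step 4 (Conclusion).} Combining Steps 2 and 3 yields the claimed equality. There is no serious obstacle here; the only subtle point is that the result implicitly requires $\ell>p$, so that $\theta^{*}>0$ and the event is genuinely rare (otherwise $I^{\textrm{ind},\prime}(\ell)=0$ and the statement is vacuous under $\theta^{*}=0$, i.e., no change of measure). The entire argument depends on the exponential-family structure of i.i.d.\ Bernoulli samples and on Step 1's Legendre identity; once those are in place, the sandwich closes automatically.
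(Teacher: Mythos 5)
Your proof is correct and follows essentially the same route as the paper: a sandwich argument combining the Jensen/Cram\'er upper bound on $-\tfrac{1}{N}\ln Q$ with the deterministic lower bound $-\tfrac{1}{N}\ln Q(\ell,\theta^{*})\geq 2\bigl(\theta^{*}\ell-\bar\Lambda(\theta^{*};T)\bigr)$, closed by the Legendre identity $\theta^{*}\ell-\bar\Lambda(\theta^{*};T)=I^{\textrm{ind},\prime}(\ell)$. The only difference is cosmetic: you spell out the Cram\'er step behind the Jensen bound, which the paper leaves implicit.
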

\begin{proof}
By Jensen's inequality we clearly have the upper bound. Namely, for every $\theta\in[0,\infty)$

\begin{equation}
\limsup_{N\rightarrow\infty}-\frac{1}{N}\ln Q(\ell,\theta)\leq 2 I^{\textrm{ind}}(\ell)
\end{equation}

Now, we need to prove that the lower bound is achieved for $\theta=\theta^{*}$, i.e., that

\begin{equation}
\liminf_{N\rightarrow\infty}-\frac{1}{N}\ln Q(\ell,\theta^{*})\geq 2 I^{\textrm{ind}}(\ell)
\end{equation}

Recalling that $\theta^{*}=\ln\frac{\ell(1-p)}{p(1-\ell)}$ and $p=\int_{s=0}^T f_{0,0}(s)ds$, we easily see that
\begin{eqnarray}
\liminf_{N\rightarrow\infty}-\frac{1}{N}\ln Q(\ell,\theta^{*})&\geq& 2\left(\theta^{*}\ell-\bar{\Lambda}(\theta^{*};T)\right)\nonumber\\
&=&2\left(\theta^{*}\ell-\ln\left(p(e^{\theta^{*}}-1)+1\right)\right)\nonumber\\
&=&2\left(\ell\ln\frac{\ell}{p}+(1-\ell)\ln\frac{1-\ell}{1-p}\right)\nonumber\\
&=&2 I^{\textrm{ind}}(\ell)\nonumber
\end{eqnarray}
This concludes the proof of the theorem.
\end{proof}

In the heterogeneous case, i.e. if $\pp^{n}$ can be different for each $n\in\N$, then $NL^{N}_{T}=\sum_{n=1}^{N}1_{\left\{\tau_{n}\leq T\right\}}$ is no longer Binomial, but it is a sum of independent (but not identically distributed) Bernoulli random variables with success probability
\begin{equation*}
p_{n}=\int_{0}^T f^{\pp^{n}}_{0,0}(s)ds
\end{equation*}
indexed by $n$. Due to independence, similar methods as the one described above can be used to construct asymptotically efficient importance sampling schemes in the heterogeneous case.

The scheme just presented essentially amounts to a twist in the intensity of the defaults. However, in contrast to the independent case, i.e., when $\beta^{C}=\beta^{S}= 0$, the situation in the general dependent case $\beta^{C},\beta^{S}\neq 0$ is more complicated. Notice also if at least one one of the $\beta^{C}_{n}$'s is not zero, then the model \eqref{E:maina}--\eqref{E:mainb} does not fall into the category of the doubly-stochastic models, so techniques as the ones used in \cite{bassamboo-jain} do not apply. Also, implementation of interacting particle schemes for Markov Chain models as the ones developed in  \cite{carmona-crepey,  vestal-carmona-fouque} do not readily apply for such intensity models. The re-sampling schemes of \cite{gkmt}  could apply in this setting, but one would need to construct an appropriate mimicking Markov Chain, something which is not clear how to do in the current setting.

We briefly present here an importance sampling scheme for the case that there exists at least one $\beta^{C}_{n}\neq 0$ and also applies independently of whether the systematic effects are present in the model or not.  The suggested measure change essentially mimics the principal idea behind the measure change for the independent case.  To be more precise, one directly twists the intensity of $NL^{N}_{T}=\sum_{n=1}^{N}1_{\left\{\tau_{n}\leq T\right\}}$.

Let $\{S_{k}\}$ be the arrival times of $NL^{N}_{T}$ and notice that $\left\{L^{N}_{T}\geq \ell\right\}=\left\{ S_{\lceil\ell N\rceil}\leq T\right\}$. Let $M^{n}_{s}=1_{\{\tau^{n}>s\}}$ and $\theta^{N}_{s}\geq 1$ be some progressively measurable twisting process. Then, define the measure $\BQ$ via the Radon-Nicodym derivative
\[
Z_{N}=e^{-\int_{0}^{S_{\lceil\ell N\rceil}}\log\left(\theta^{N}_{s-}\right)d(NL^{N}_{s})-\int_{0}^{S_{\lceil\ell N\rceil}}\left(1-\theta^{N}_{s}\right)\sum_{n=1}^{N}\lambda^{n}_{s}M^{n}_{s} ds}.
\]

It is known that if $\BE\left[ e^{-\sum_{k=1}^{\lceil\ell N\rceil}\log\left(\theta^{N}_{S_{k}-}\right)}\right]<\infty$, then $\BQ$ defined by $\frac{d\BP}{d\BQ}=Z_{N}$ is a probability measure and it can be shown that $NL^{N}_{s}$  admits $\BQ-$intensity $\theta^{N}_{s}\sum_{n=1}^{N}\lambda^{n}_{s}M^{n}_{s}$ on the interval $[0, S_{\lceil\ell N\rceil})$.

This construction gives us some freedom into choosing appropriately the twisting process $\theta^{N}_{s}$. Different choices of the twisting process $\theta^{N}_{s}$ are of course possible. For tractability purposes we restrict attention to a one-parameter family and set $$\theta^{N}_{s}=\frac{\beta N}{\sum_{n=1}^{N}\lambda^{n}_{s}M^{n}_{s}}+1.$$

For any $\beta\geq 0$ and under the measure induced by $Z_{N}$, i.e. under $\BQ_{\beta}$, the process $NL^{N}_{s}$ has intensity $\sum_{n=1}^{N}\lambda^{n}_{s}M^{n}_{s}+\beta N$ on  $[0, S_{\lceil\ell N\rceil})$, i.e. it amounts to an additive shift of the intensity. Thus, $\beta$ is a superimposed default rate and its role is to increase the default rate in the whole portfolio.

The purpose then is to optimize the limit as $N\rightarrow\infty$ of the upper bound of the second moment of the resulting estimator over $\beta$.  This is the measure change that is  investigated in \cite{GieseckeShkolnik2011}, and it is shown there that there is a choice of $\beta=\beta^{*}$ for which asymptotic optimality can be established. Namely, there is a choice of $\beta=\beta^{*}$ that minimizes the second moment of the estimator in the limit as $N\rightarrow\infty$. We refer the interested reader to \cite{GieseckeShkolnik2011} for implementation details on this change of measure for related intensity models and for corresponding simulation results.

\section{Conclusions}\label{S:Conclusions}

We presented an empirically motivated model of correlated default timing for large portfolios. Large portfolio analysis allows to approximate the distribution of the
 loss from default, whereas Gaussian corrections make the approximation valid even for portfolios of moderate size. The results can be used to compute
the loss distribution and to approximate portfolio risk measures such as Value-at-Risk or Expected Shortfall. Then, large deviations analysis can help understand the tail of
the loss distribution and find the most-likely paths to systemic failure and to the creation of default clusters. Such results give useful insights into the behavior of systemic risk as a function of the characteristics of the names in the portfolio and can be also potentially used to determine how to optimally safeguard against rare large losses.  Importance sampling techniques can be used to construct asymptotically efficient estimators for tail event probabilities.

\section{Acknowledgements}
The author was partially supported by the National Science Foundation
(DMS 1312124).

\bibliographystyle{alpha}
\newcommand{\etalchar}[1]{$^{#1}$}

%\bibliography{biblio}

\end{document}